\newcommand{\msize}[1]{{\left|#1\right|}}
\newcommand{\Nei}[2]{N(#1,#2)}
\newcommand{\Neiclosed}[2]{N[#1, #2]}
\newcommand{\Tsub}[2]{T_{#1}^{#2}}
\newcommand{\bfI}{I}
\newcommand{\calS}{{\cal S}}
\newcommand{\onestep}{\leftrightarrow}
\newcommand{\sevstep}{\leftrightsquigarrow}
\newcommand{\sevstepT}[1]{\overset{#1}{\sevstep}}
\newcommand{\ImSet}[1]{{\sf R}(\bfI_{#1})}
\newcommand{\ImSetp}{{\sf R}(\bfI^\prime)}
\newcommand{\Forest}{F}
\newcommand{\numq}{q}
\newcommand{\bfIstar}{\bfI^*}
\newcommand{\bfIp}{\bfI^{\prime}}
\newcommand{\dist}{\mathsf{dist}}
\newcommand{\Tminus}{\bar{T}}
\newcommand{\Tmsub}[2]{\Tminus_{#1}^{#2}}
\newcommand{\Iminus}{\bar{I}}
\newenvironment{listing}[1]{%
        \begin{list}{*}{%
                 \settowidth{\labelwidth}{#1}%
                 \setlength{\leftmargin}{\labelwidth}%
                 \advance \leftmargin by 12pt
                   \setlength{\itemsep}{0pt}%
                   \setlength{\parsep}{0pt}%
                   \setlength{\topsep}{0pt}%
                   \setlength{\parskip}{0pt}%
}%
}{%
\end{list}}
\newcounter{one}
\newcommand{\one}{{\rm \roman{one}}}
\newcounter{two}
\newcommand{\two}{{\rm \roman{two}}}
\newcounter{three}
\newcommand{\three}{{\rm \roman{three}}}
\newcounter{four}
\newcounter{five}
\newcounter{six}
\definecolor{lightblue}{rgb}{0.5,0.5,1.0}
\definecolor{darkred}{rgb}{0.5,0,0}
\definecolor{darkgreen}{rgb}{0,0.5,0}
\definecolor{darkblue}{rgb}{0,0,0.5}
\begin{document}
\title{Linear-Time Algorithm for Sliding Tokens on Trees}

\author{
Erik D.~Demaine\inst{1} \and
Martin L.~Demaine\inst{1} \and 
Eli Fox-Epstein\inst{2} \and 
Duc A.~Hoang\inst{3} \and \\
Takehiro Ito\inst{4} \and 
Hirotaka Ono\inst{5} \and
Yota Otachi\inst{3} \and 
Ryuhei Uehara\inst{3} \and 
Takeshi Yamada\inst{3}
}

\institute{
    MIT Computer Science and Artificial Intelligence Laboratory,\\
    32 Vassar St., Cambridge, MA 02139, USA.\\
	\email{\{edemaine,~mdemaine\}@mit.edu}
\and
	Department of Computer Science, Brown University, \\
	115 Waterman Street, Providence, RI 02912-1910, USA. \\
	\email{ef@cs.brown.edu}
\and
    School of Information Science, JAIST, \\
    Asahidai 1-1, Nomi, Ishikawa 923-1292, Japan.\\
    \email{\{hoanganhduc, otachi, uehara, tyama\}@jaist.ac.jp}
\and
	Graduate School of Information Sciences, Tohoku University, \\
    Aoba-yama 6-6-05, Sendai, 980-8579, Japan.\\
	\email{takehiro@ecei.tohoku.ac.jp}
\and	
	Faculty of Economics, Kyushu University, \\
	Hakozaki 6-19-1, Higashi-ku, Fukuoka, 812-8581, Japan. \\
	\email{hirotaka@econ.kyushu-u.ac.jp}
}

\maketitle

\begin{abstract}
Suppose that we are given two independent sets $\bfI_b$ and $\bfI_r$ of 
a graph such that $\msize{\bfI_b}=\msize{\bfI_r}$, 
and imagine that a token is placed on each vertex in $\bfI_b$. 
Then, the {\sc sliding token} problem is to determine whether 
there exists a sequence of independent sets which transforms $\bfI_b$ 
into $\bfI_r$ so that each independent set in the sequence results from 
the previous one by sliding exactly one token along an edge in the graph. 
This problem is known to be PSPACE-complete even for planar graphs, and also for bounded treewidth graphs. 
In this paper, we thus study the problem restricted to trees, and give the following three results:
(1)~the decision problem is solvable in linear time; 
(2)~for a yes-instance, we can find in quadratic time an actual sequence of independent sets between $\bfI_b$ and $\bfI_r$ whose length (i.e., the number of token-slides) is quadratic; and 
(3)~there exists an infinite family of instances on paths for which any sequence requires quadratic length.
\end{abstract}

\section{Introduction}

Recently, {\em reconfiguration problems} attract the attention 
in the field of theoretical computer science. 
The problem arises when we wish to find a step-by-step transformation between 
two feasible solutions of a problem such that 
all intermediate results are also feasible and 
each step abides by a fixed reconfiguration rule 
(i.e., an adjacency relation defined on feasible solutions of the original problem).
This kind of reconfiguration problem has been studied extensively 
for several well-known problems, including 
{\sc independent set}~\cite{BB14,Bon14,BKW14,HearnDemaine2005,HearnDemaine2009,IDHPSUU,ItoKaminskiOnoSuzukiUeharaYamanaka2014,KaminskiMedvedevMilanic2012,MNRSS13,MNRW14,Wro14}, 
{\sc satisfiability}~\cite{Kolaitis,MTY11}, 
{\sc set cover}, {\sc clique}, {\sc matching}~\cite{IDHPSUU}, 
{\sc vertex-coloring}~\cite{BJLPP14,BC09,CHJ11,Wro14},
{\sc list edge-coloring}~\cite{IKD09,IKZ11},
{\sc list $L(2,1)$-labeling}~\cite{IKOZ_isaac},
{\sc subset sum}~\cite{ID11}, 
{\sc shortest path}~\cite{Bon13,KMP11}, and so on.

\subsection{{\sc Sliding token}}
The {\sc sliding token} problem was introduced by Hearn and Demaine~\cite{HearnDemaine2005} 
as a one-player game, which can be seen as a reconfiguration problem for {\sc independent set}. 
Recall that an {\em independent set} of a graph $G$ is a vertex-subset of $G$ in which 
no two vertices are adjacent. 
(Figure~\ref{fig:example} depicts five different independent sets in the same graph.)
Suppose that we are given two independent sets $\bfI_b$ and $\bfI_r$ of a graph 
$G = (V,E)$ such that $\msize{\bfI_b}=\msize{\bfI_r}$, 
and imagine that a token (coin) is placed on each vertex in $\bfI_b$. 
Then, the {\sc sliding token} problem is to determine 
whether there exists a sequence 
$\langle \bfI_1, \bfI_2, \ldots, \bfI_{\ell} \rangle$ of independent sets of $G$ such that
\begin{listing}{aaa}
\item[(a)] $\bfI_1=\bfI_b$, $\bfI_{\ell}=\bfI_r$, 
 and $\msize{\bfI_i} = \msize{\bfI_b}=\msize{\bfI_r}$ for all $i$, $1 \le i \le \ell$; and 
\item[(b)] for each $i$, $2 \le i \le \ell$, 
 there is an edge $\{u,v\}$ in $G$ such that $\bfI_{i-1} \setminus\bfI_{i}=\{u\}$ 
 and $\bfI_{i}\setminus\bfI_{i-1}=\{v\}$, 
 that is, $\bfI_{i}$ can be obtained from $\bfI_{i-1}$ by sliding exactly 
 one token on a vertex $u \in \bfI_{i-1}$ to its adjacent vertex $v$ along $\{u,v\} \in E$.
\end{listing}
Such a sequence is called a {\em reconfiguration sequence} between $\bfI_b$ and $\bfI_r$. 
Figure~\ref{fig:example} illustrates a reconfiguration sequence 
$\langle \bfI_1, \bfI_2, \ldots, \bfI_5 \rangle$ of independent sets 
which transforms $\bfI_b = \bfI_1$ into $\bfI_r = \bfI_5$. 
Hearn and Demaine proved that {\sc sliding token} is PSPACE-complete for planar graphs, 
as an example of the application of their powerful tool, 
called the nondeterministic constraint logic model, 
which can be used to prove PSPACE-hardness of 
many puzzles and games~\cite{HearnDemaine2005}, \cite[Sec.~9.5]{HearnDemaine2009}. 

\begin{figure}[t]
\begin{center}
\includegraphics[width=0.9\textwidth]{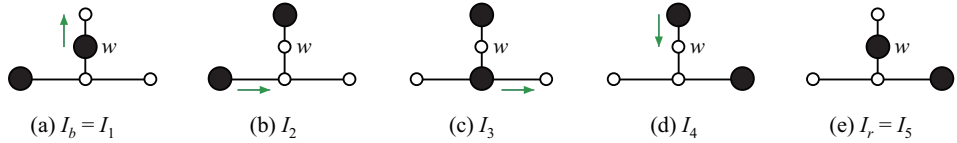}
\end{center}
\vspace{-1em}
\caption{A sequence $\langle \bfI_1, \bfI_2, \ldots, \bfI_5 \rangle$ of independent sets of the same graph, where the vertices in independent sets are depicted by large black circles (tokens).}
\label{fig:example}
\end{figure}

\begin{figure}[b]
\begin{center}
\includegraphics[width=0.4\textwidth]{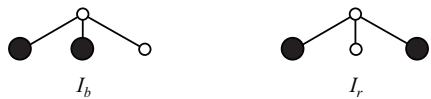}
\end{center}
\vspace{-1em}
\caption{A yes-instance for {\sc ISReconf} under the TJ rule, 
which is a no-instance for the {\sc sliding token} problem.}
\label{fig:rules}
\end{figure}

\subsection{Related and known results}

As the (ordinary) {\sc independent set} problem is a key problem among thousands of NP-complete problems,
{\sc sliding token} plays a very important role since 
several PSPACE-hardness results have been proved using reductions from it.
Indeed, {\sc sliding token} is one of the most well-studied reconfiguration problems.  

	In addition, reconfiguration problems for {\sc independent set} ({\sc ISReconf}, for short) have been studied under different reconfiguration rules, as follows.
\smallskip

	\begin{listing}{aaa}
	\item[$\bullet$]{\em Token Sliding} (TS rule)~\cite{BC09,BKW14,HearnDemaine2005,HearnDemaine2009,KaminskiMedvedevMilanic2012,Wro14}: 
      This rule corresponds to {\sc sliding token}, that is, 
      we can slide a single token only along an edge of a graph.
							\smallskip
	
	\item[$\bullet$] {\em Token Jumping} (TJ rule)~\cite{BKW14,ItoKaminskiOnoSuzukiUeharaYamanaka2014,KaminskiMedvedevMilanic2012,Wro14}: 
      A single token can ``jump'' to any vertex (including non-adjacent one) 
      if it results in an independent set.
							\smallskip
							
	\item[$\bullet$] {\em Token Addition and Removal} (TAR rule)~\cite{BB14,Bon14,IDHPSUU,KaminskiMedvedevMilanic2012,MNRSS13,MNRW14,Wro14}: 
      We can either add or remove a single token at a time 
      if it results in an independent set of cardinality at least a given threshold minus one. 
      Therefore, under the TAR rule, independent sets in the sequence do not have the same cardinality.
	\end{listing}
\smallskip

\noindent
 	We note that the existence of a desired sequence depends deeply on the reconfiguration rules. 
 (See \figurename~\ref{fig:rules} for example.) 
	However, {\sc ISReconf} is PSPACE-complete under any of the three reconfiguration rules for planar graphs~\cite{BC09,HearnDemaine2005,HearnDemaine2009}, for perfect graphs~\cite{KaminskiMedvedevMilanic2012}, and for bounded bandwidth graphs~\cite{Wro14}.
	The PSPACE-hardness implies that, unless ${\rm NP} = {\rm PSPACE}$, there exists an instance of {\sc sliding token} which requires a super-polynomial number of token-slides even in a minimum-length reconfiguration sequence. 
In such a case, tokens should make ``detours'' to avoid violating independence.
(For example, see the token placed on the vertex $w$ in \figurename~\ref{fig:example}(a);
it is moved twice even though $w \in \bfI_b \cap \bfI_r$.)

	We here explain only the results which are strongly related to this paper, that is, {\sc sliding token} on trees;
see the references above for the other results.  
\medskip

\noindent
{\bf Results for TS rule ({\sc sliding token}).}

	Kami\'nski et al.~\cite{KaminskiMedvedevMilanic2012} gave a linear-time algorithm to solve {\sc sliding token} for cographs 
(also known as $P_4$-free graphs). 
	They also showed that, for any yes-instance on cographs, two given independent sets $\bfI_b$ and $\bfI_r$ have a reconfiguration sequence such that no token makes detour. 

	Very recently, Bonsma et al.~\cite{BKW14} proved that {\sc sliding token} can be solved in polynomial time for claw-free graphs. 
	Note that neither cographs nor claw-free graphs contain trees as a (proper) subclass. 
	Thus, the complexity status for trees was open under the TS rule. 
\medskip

\noindent
{\bf Results for trees}.

	In contrast to the TS rule, it is known that {\sc ISReconf} can be solved in linear time under the TJ and TAR rules for even-hole-free graphs~\cite{KaminskiMedvedevMilanic2012}, which include trees. 
	Indeed, the answer is always ``yes'' under the two rules when restricted to even-hole-free graphs.
	Furthermore, tokens never make detours in even-hole-free graphs under the TJ and TAR rules.

	On the other hand, under the TS rule, tokens are required to make detours even in trees.
(See \figurename~\ref{fig:example}.)
	In addition, there are no-instances for trees under TS rule.
(See \figurename~\ref{fig:rules}.)
	These make the problem much more complicated, and we think they are the main reasons why {\sc sliding token} for trees was open, despite the recent intensive algorithmic research on {\sc ISReconf}~\cite{BB14,Bon14,BKW14,ItoKaminskiOnoSuzukiUeharaYamanaka2014,KaminskiMedvedevMilanic2012,MNRW14}. 

\subsection{Our contribution}

	In this paper, we first prove that the {\sc sliding token} problem is solvable in $O(n)$ time for any tree $T$ with $n$ vertices. 
	Therefore, we can conclude that {\sc ISReconf} for trees is in P (indeed, solvable in linear time) under any of the three reconfiguration rules.
	
	It is remarkable that there exists an infinite family of instances on paths for which any reconfiguration sequence requires $\Omega(n^2)$ length, although we can decide it is a yes-instance in $O(n)$ time. 
	As the second result of this paper, we give an $O(n^2)$-time algorithm which finds an actual reconfiguration sequence of length $O(n^2)$ between two given independent sets for a yes-instance.
	
	Since the treewidth of any graph $G$ can be bounded by the bandwidth of $G$, the result of~\cite{Wro14} implies that {\sc sliding token} is PSPACE-complete for bounded treewidth graphs. 
(See~\cite{Bod98} for the definition of treewidth.)
	Thus, there exists an instance on bounded treewidth graphs which requires a super-polynomial number of token-slides even in a minimum-length reconfiguration sequence unless ${\rm NP} = {\rm PSPACE}$.
	Therefore, it is interesting that any yes-instance on a tree, whose treewidth is one, has an $O(n^2)$-length reconfiguration sequence even though trees require to make detours to transform.

\subsection{Technical overview}
\label{subsec:highlight}

	We here explain our main ideas;
formal descriptions will be given later. 

	We say that a token on a vertex $v$ is ``rigid'' under an independent set $\bfI$ of a tree $T$ if it cannot be slid at all, that is, $v \in \bfI^\prime$ holds for {\em any} independent set $\bfI^\prime$ of $T$ which is reconfigurable from $\bfI$.
(For example, four tokens in \figurename~\ref{fig:rules} are rigid.)
	Our algorithm is based on the following two key points. 
	\begin{listing}{aaa}
	\item[(1)] In Lemma~\ref{lem:rigid}, we will give a simple but non-trivial characterization of rigid tokens, based on which we can find all rigid tokens of two given independent sets $\bfI_b$ and $\bfI_r$ in time $O(n)$. 
	Note that, if $\bfI_b$ and $\bfI_r$ have different placements of rigid tokens, then it is a no-instance (Lemma~\ref{lem:step1}).
	\item[(2)] Otherwise, we obtain a forest by deleting the vertices with rigid tokens together with their neighbors (Lemma~\ref{lem:step2}).
	We will prove in Lemma~\ref{lem:secondkey} that the answer is ``yes'' as long as each tree in the forest contains the same number of tokens in $\bfI_b$ and $\bfI_r$. 
	\end{listing}

\section{Preliminaries}

In this section, we introduce some basic terms and notation. 

\subsection{Graph notation}

In the {\sc sliding token} problem, 
we may assume without loss of generality that graphs are simple and connected.
For a graph $G$, we sometimes denote by $V(G)$ and $E(G)$ the vertex set and edge set of $G$, respectively. 

	In a graph $G$, a vertex $w$ is said to be a {\em neighbor} of a vertex $v$ if $\{v,w \} \in E(G)$. 
	For a vertex $v$ in $G$, let $\Nei{G}{v} = \{w \in V(G) \mid \{v, w \}\in E(G) \}$, and let $\Neiclosed{G}{v} = \Nei{G}{v}\cup\{v\}$.
	For a subset $S \subseteq V(G)$, we simply write $\Neiclosed{G}{S} = \bigcup_{v \in S} \Neiclosed{G}{v}$. 
	For a vertex $v$ of $G$, we denote by $\deg_{G}(v)$ the degree of $v$ in $G$, that is, $\deg_{G}(v) = \msize{\Nei{G}{v}}$.
	For a subgraph $G^\prime$ of a graph $G$, we denote by $G \setminus G^\prime$ the subgraph of $G$ induced by the vertices in $V(G) \setminus V(G^\prime)$. 

\begin{figure}[t]
\begin{center}
\includegraphics[width=0.3\textwidth]{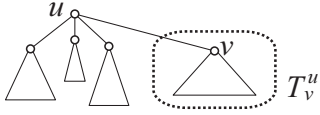}
\end{center}
\vspace{-1em}
\caption{Subtree $\Tsub{v}{u}$ in the whole tree $T$.}
\label{fig:subtree}
\end{figure}

	Let $T$ be a tree. 
	For two vertices $v$ and $w$ in $T$, the unique path between $v$ and $w$ is simply called the {\em $vw$-path} in $T$. 
	We denote by $\dist(v, w)$ the number of edges in the $vw$-path in $T$.
	For two vertices $u$ and $v$ of a tree $T$, let $\Tsub{v}{u}$ be the subtree of $T$ obtained by regarding $u$ as the root of $T$ and then taking the subtree rooted at $v$ which consists of $v$ and all descendants of $v$.
(See \figurename~\ref{fig:subtree}.)
	It should be noted that $u$ is not contained in the subtree $\Tsub{v}{u}$. 

\subsection{Definitions for {\sc sliding token}}

Let $\bfI_i$ and $\bfI_j$ be two independent sets of a graph $G$ such that $\msize{\bfI_i} = \msize{\bfI_j}$. 
If there exists exactly one edge $\{u,v\}$ in $G$ such that $\bfI_{i} \setminus\bfI_{j}=\{u\}$ 
and $\bfI_{j}\setminus\bfI_{i}=\{v\}$, 
then we say that $\bfI_{j}$ can be obtained from $\bfI_{i}$ by {\em sliding} 
the token on $u \in \bfI_{i}$ to its adjacent vertex $v$ along the edge $\{u,v\}$, 
and denote it by $\bfI_{i} \onestep \bfI_{j}$. 
We note that the tokens are unlabeled, while the vertices in a graph are labeled.
	We sometimes omit to say (the label of) the vertex on which a token is placed, and simply say ``a token in an independent set $\bfI$.''

A {\em reconfiguration sequence} between two independent sets $\bfI_1$ and $\bfI_{\ell}$ of $G$ 
is a sequence $\langle \bfI_1, \bfI_2, \ldots, \bfI_{\ell} \rangle$ of 
independent sets of $G$ such that $\bfI_{i-1} \onestep \bfI_i$ for $i=2, 3, \ldots, \ell$.
We sometimes write $\bfI \in \calS$ if an independent set $\bfI$ of $G$ appears in the reconfiguration sequence $\calS$. 
We write $\bfI_{1} \sevstepT{G} \bfI_{\ell}$ if there exists a reconfiguration sequence $\calS$ between $\bfI_1$ and $\bfI_{\ell}$ such that all independent sets $\bfI \in \calS$ satisfy $\bfI \subseteq V(G)$.
%
%
The {\em length} of a reconfiguration sequence $\calS$ is defined as 
the number of independent sets contained in $\calS$.
For example, 
the length of the reconfiguration sequence in \figurename~\ref{fig:example} is $5$. 

Given two independent sets $\bfI_b$ and $\bfI_r$ of a graph $G$, the {\sc sliding token} problem is to determine whether $\bfI_b \sevstepT{G} \bfI_r$ or not.
We may assume without loss of generality that $\msize{\bfI_b} = \msize{\bfI_r}$; 
otherwise the answer is clearly ``no.'' 
Note that {\sc sliding token} is 
a decision problem asking for the existence of a reconfiguration sequence 
between $\bfI_b$ and $\bfI_r$, and hence it does not ask for an actual reconfiguration sequence. 
We always denote by $\bfI_b$ and $\bfI_r$ the {\em initial} and {\em target} independent sets of $G$, 
respectively.

	\section{Algorithm for Trees}
	\label{sec:algorithm}
	
	In this section, we give the main result of this paper. 
	\begin{theorem} \label{the:tree}
	The {\sc sliding token} problem can be solved in linear time for trees. 
	\end{theorem}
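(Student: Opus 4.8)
The plan is to reduce the whole question to two simple combinatorial checks, each computable by a single traversal of the tree, by exploiting the rigidity structure of reconfiguration on trees. The central notion is that of a \emph{rigid} token: a token that occupies the same vertex in every independent set reachable from the current one, and hence can never contribute to a transformation. The algorithm I have in mind proceeds in the following stages.

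First, I would compute the set of rigid tokens for both $\bfI_b$ and $\bfI_r$. By the characterization in Lemma~\ref{lem:rigid}, rigidity is a purely local, structural condition that can be tested without exploring the (exponentially large) reconfiguration graph; a single traversal of $T$ therefore suffices to locate all rigid tokens in $O(n)$ time. Since a rigid token never moves, every reconfiguration sequence must keep it fixed. Consequently, if the placements of rigid tokens in $\bfI_b$ and $\bfI_r$ differ, the instance is immediately a no-instance by Lemma~\ref{lem:step1}, and this comparison is again an $O(n)$ scan.

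Second, assuming the rigid placements agree, I would delete from $T$ every vertex carrying a rigid token together with all of its neighbors. No token---rigid or otherwise---can ever occupy a neighbor of a rigid token, since that would violate independence with the pinned rigid token; and the rigid tokens themselves are fixed. Hence this deletion decouples the problem into the remaining forest $\Forest$ (Lemma~\ref{lem:step2}) without discarding any reconfiguration options for the free tokens. The final, decisive step is Lemma~\ref{lem:secondkey}: within $\Forest$ the free tokens have enough mobility that $\bfI_b \sevstepT{T} \bfI_r$ holds \emph{if and only if} each connected component (tree) of $\Forest$ contains the same number of tokens under $\bfI_b$ as under $\bfI_r$. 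Counting tokens per component is one more $O(n)$ pass, so the entire decision procedure runs in linear time, establishing the theorem.

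I expect the crux to be Lemma~\ref{lem:secondkey}, and in particular its sufficiency direction: given that every tree of the forest carries matching token counts, one must guarantee the existence of a reconfiguration sequence. The difficulty is that even after removing the rigid obstructions, individual tokens may be temporarily blocked and forced to make \emph{detours}, so a naive greedy movement need not succeed. The argument will have to show that, inside a component free of rigidity, any free token can be routed to its target position by first shuffling the other free tokens out of its way and then restoring them. Establishing the rigidity characterization of Lemma~\ref{lem:rigid} in both directions is the other nontrivial ingredient, since the local condition must certify non-movability against \emph{all} reachable configurations, not merely the independent sets adjacent to $\bfI$ in the reconfiguration graph.
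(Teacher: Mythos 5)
Your proposal is correct and takes essentially the same approach as the paper: the identical two-stage algorithm (compute and compare the rigid-token sets via the characterization of Lemma~\ref{lem:rigid} and Lemma~\ref{lem:step1}, then delete the closed neighborhoods of the rigid tokens and accept iff every component of the resulting forest holds equally many tokens of $\bfI_b$ and $\bfI_r$, justified by Lemmas~\ref{lem:step2} and~\ref{lem:secondkey}). The one detail you gloss over is the $O(n)$ computation of rigid tokens --- the paper does not apply Lemma~\ref{lem:rigid} by a single traversal but instead iteratively peels off immediately slidable movable tokens (Lemmas~\ref{lem:removing-movable-tokens} and~\ref{lem:all-rigid-tokens}) --- yet this is an implementation choice rather than a gap, since the recursive characterization can also be evaluated in linear time by a tree DP.
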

	
	As a proof of Theorem~\ref{the:tree}, we give an $O(n)$-time algorithm which solves {\sc sliding token} for a tree with $n$ vertices. 

	\subsection{Rigid tokens}
	
	In this subsection, we formally define the concept of rigid tokens, and give their nice characterization. 
\smallskip
	
	Let $T$ be a tree, and let $\bfI$ be an independent set of $T$. 
	We say that a token on a vertex $v \in \bfI$ is \emph{$(T, \bfI)$-rigid} if $v \in \bfI^\prime$ holds for \emph{any} independent set $\bfI^\prime$ of $T$ such that $\bfI \sevstepT{T} \bfI^\prime$. 
	Conversely, if a token on a vertex $v \in \bfI$ is not $(T, \bfI)$-rigid, then it is \emph{$(T, \bfI)$-movable};
in other words, there exists an independent set $\bfI^\prime$ such that $v \not\in \bfI^\prime$ and $\bfI \sevstepT{T} \bfI^\prime$. 
	For example, in \figurename~\ref{fig:rigidexample}, the tokens $t_1, t_2, t_3, t_4$ are $(T, \bfI)$-rigid, while the tokens $t_5, t_6, t_7$ are $(T, \bfI)$-movable. 
	Note that, even though $t_6$ and $t_7$ cannot be slid to any neighbor in $T$ under $\bfI$, we can slide them after sliding $t_5$ downward. 
	
	We then extend the concept of rigid/movable tokens to subtrees of $T$. 
	For any subtree $T^\prime$ of $T$, we denote simply $\bfI \cap T^\prime =  \bfI \cap V(T^\prime)$.
	Then, a token on a vertex $v \in \bfI \cap T^\prime$ is \emph{$(T^\prime, \bfI \cap T^\prime)$-rigid} if $v \in J$ holds for \emph{any} independent set $J$ of $T^\prime$ such that $\bfI \cap T^\prime \sevstepT{T^\prime} J$.
	For example, in \figurename~\ref{fig:rigidexample}, tokens $t_6$ and $t_7$ are $(T^\prime, \bfI \cap T^\prime)$-rigid even though they are $(T, \bfI)$-movable in the whole tree $T$. 
	Note that, since independent sets are restricted only to the subtree $T^\prime$, we cannot use any vertex (and hence any edge) in $T \setminus T^\prime$ during the reconfiguration. 
	Furthermore, the vertex-subset $J \cup \bigl(\bfI \cap (T \setminus T^\prime) \bigr)$ does not necessarily form an independent set of the whole tree $T$. 
\medskip

\begin{figure}[t]
	\begin{center}
	\includegraphics[width=0.4\textwidth]{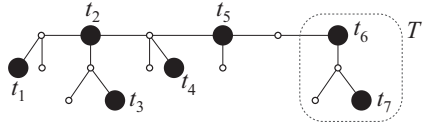}
	\end{center}
	\vspace{-1em}
	\caption{An independent set $\bfI$ of a tree $T$, where $t_1, t_2, t_3, t_4$ are $(T, \bfI)$-rigid tokens and $t_5, t_6, t_7$ are $(T, \bfI)$-movable tokens. For the subtree $T^\prime$, tokens $t_6, t_7$ are $(T^\prime, \bfI \cap T^\prime)$-rigid.}
	\label{fig:rigidexample}
\end{figure}

 \begin{figure}[b]
	\begin{center}
	\includegraphics[width=0.78\textwidth]{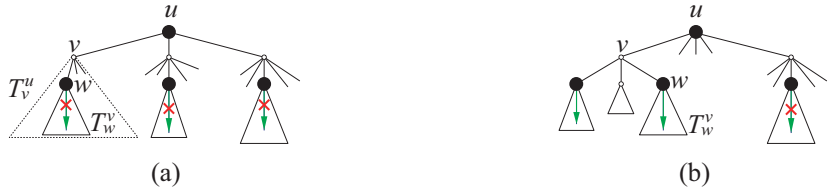}
	\end{center}
	\vspace{-1em}
	\caption{(a) A $(T, \bfI)$-rigid token on $u$, and (b) a $(T, \bfI)$-movable token on $u$.}
	\label{fig:rigidmovable}
\end{figure}

	We now give our first key lemma, which gives a characterization of rigid tokens. 
(See also \figurename~\ref{fig:rigidmovable}(a) for the claim (b) below.)
	\begin{lemma} \label{lem:rigid}
	Let $\bfI$ be an independent set of a tree $T$, and let $u$ be a vertex in $\bfI$.
	\begin{listing}{aaa}
	\item[{\rm (}a{\rm )}] Suppose that $\msize{V(T)} = \msize{\{u\}} = 1$. Then, the token on $u$ is $(T,\bfI)$-rigid.
	\item[{\rm (}b{\rm )}] Suppose that $\msize{V(T)} \ge 2$. 
					Then, a token on $u$ is $(T,\bfI)$-rigid if and only if, 
					for every neighbor $v \in \Nei{T}{u}$, there exists a vertex $w \in \bfI \cap \Nei{\Tsub{v}{u}}{v}$
					such that the token on $w$ is $(\Tsub{w}{v}, \bfI \cap \Tsub{w}{v})$-rigid.
	\end{listing}
	\end{lemma}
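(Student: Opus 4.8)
The statement (a) is immediate: if $T$ is the single vertex $u$, no slide is possible, so the token cannot leave $u$ and is $(T,\bfI)$-rigid. For (b) the plan is to prove the two implications separately, and the whole argument rests on one structural observation: \emph{as long as the token on $u$ is not moved, every neighbor $v\in\Nei{T}{u}$ stays unoccupied}, because $\bfI$ is independent and independence is preserved along a reconfiguration sequence. Consequently, while $u$ carries its token no token can cross the edge $\{v,w\}$ joining $v$ to any child $w$ of $v$: the slide $w\to v$ would require $v$ to have no occupied neighbor, yet $u$ is one, and the slide $v\to w$ would require $v$ to be occupied, which it is not. Hence each subtree $\Tsub{w}{v}$ is ``sealed off'' while $u$ is occupied: no token enters or leaves it, and the tokens inside reconfigure exactly as in the isolated tree $\Tsub{w}{v}$ started from $\bfI\cap\Tsub{w}{v}$.

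For the direction ($\Leftarrow$), assume that for every $v\in\Nei{T}{u}$ there is a child $w$ of $v$ with $w\in\bfI$ whose token is $(\Tsub{w}{v},\bfI\cap\Tsub{w}{v})$-rigid; I would show the token on $u$ is $(T,\bfI)$-rigid. Suppose not, and take any reconfiguration sequence from $\bfI$ in which the token first leaves $u$, say it slides $u\to v_0$. Throughout the prefix before this slide $u$ stays occupied, so by the sealing observation the subtree $\Tsub{w_0}{v_0}$, for the guaranteed rigid child $w_0$ of $v_0$, evolves as an isolated instance started from $\bfI\cap\Tsub{w_0}{v_0}$; since $w_0$ is rigid there, its token never leaves $w_0$. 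Hence at the instant of the slide $u\to v_0$ the vertex $w_0\in\Nei{T}{v_0}$ is still occupied, contradicting that the resulting set is independent. So $u$ is rigid.

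For the direction ($\Rightarrow$) I would argue the contrapositive. Suppose the characterization fails: some neighbor $v$ of $u$ has no child $w\in\bfI$ whose token is $(\Tsub{w}{v},\bfI\cap\Tsub{w}{v})$-rigid, i.e.\ every occupied child $w$ of $v$ is $(\Tsub{w}{v},\bfI\cap\Tsub{w}{v})$-movable. For each such $w$ there is a reconfiguration inside $\Tsub{w}{v}$ that empties $w$; because distinct children of $v$ have vertex-disjoint subtrees and because $v$ stays empty, each such local reconfiguration preserves independence in the whole of $T$ and they can be concatenated without interference, leaving $u$ untouched. After performing all of them, no neighbor of $v$ other than $u$ is occupied, and $v$ itself is empty, so the slide $u\to v$ is legal. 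Thus $u$ is $(T,\bfI)$-movable, which completes the contrapositive.

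The main obstacle is making the ``sealing'' observation fully rigorous and using it in both directions: one must verify that (i) while $u$ is occupied the restriction of any global sequence to $\Tsub{w}{v}$ is a legitimate reconfiguration inside $\Tsub{w}{v}$, since no token crosses $\{v,w\}$ and the token count in $\Tsub{w}{v}$ is conserved, and (ii) conversely any reconfiguration carried out inside a single $\Tsub{w}{v}$ lifts to a legal reconfiguration of $T$, which holds precisely because $v$ remains empty so the only boundary edge $\{v,w\}$ never causes a conflict. The vertex-disjointness of the subtrees $\Tsub{w}{v}$ over the children $w$ of $v$ is what lets me combine the clearing steps in the ($\Rightarrow$) direction, and the ``first slide off $u$'' device is what localizes the argument in the ($\Leftarrow$) direction.
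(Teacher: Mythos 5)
Your proof is correct and follows essentially the same route as the paper's: the backward direction localizes to the first slide off $u$ and uses the rigid token on $w$ to block it, and the forward direction is the same contrapositive argument that clears each movable child's subtree $\Tsub{w}{v}$ and then slides the token from $u$ to $v$. Your explicit ``sealing'' observation is simply a more careful statement of what the paper uses implicitly, namely that while $u$ is occupied a global sequence restricts to a valid sequence inside each subtree $\Tsub{w}{v}$, and conversely any sequence inside $\Tsub{w}{v}$ lifts to $T$ because $v$ stays empty.
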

	\begin{proof}
	Obviously, the claim (a) holds.
	In the following, we thus assume that $\msize{V(T)} \ge 2$ and prove the claim (b).

	We first show the if-part.
	Suppose that, for every neighbor $v \in \Nei{T}{u}$, there exists a vertex $w \in \bfI \cap \Nei{\Tsub{v}{u}}{v}$ such that the token on $w$ is $(\Tsub{w}{v}, \bfI \cap \Tsub{w}{v})$-rigid.
(See \figurename~\ref{fig:rigidmovable}(a).)
	Then, we will prove that the token $t$ on $u$ is $(T, \bfI)$-rigid. 
	Since we can slide a token only along an edge of $T$, if $t$ is not $(T, \bfI)$-rigid (and hence is $(T, \bfI)$-movable), then it must be slid to some neighbor $v \in \Nei{T}{u}$.
	By the assumption, $v$ is adjacent with another token $t^\prime$ placed on $w \in \bfI \cap \Nei{\Tsub{v}{u}}{v}$, and hence we first have to slide $t^\prime$ to one of its neighbors other than $v$.
	However, this is impossible since the token $t^\prime$ on $w$ is assumed to be $(\Tsub{w}{v}, \bfI \cap \Tsub{w}{v})$-rigid and hence $w \in J$ holds for any independent set $J$ of $\Tsub{w}{v}$ such that $\bfI \cap \Tsub{w}{v} \sevstepT{\Tsub{w}{v}} J$. 
	We can thus conclude that $t$ is $(T, \bfI)$-rigid. 

	We then show the only-if-part by taking a contrapositive. 
	Suppose that $u$ has a neighbor $v \in \Nei{T}{u}$ such that either $\bfI \cap \Nei{\Tsub{v}{u}}{v} = \emptyset$ or all tokens on $w \in \bfI \cap \Nei{\Tsub{v}{u}}{v}$ are $(\Tsub{w}{v}, \bfI \cap \Tsub{w}{v})$-movable.
(See \figurename~\ref{fig:rigidmovable}(b).)
	Then, we will prove that the token $t$ on $u$ is $(T, \bfI)$-movable;
in particular, we can slide $t$ from $u$ to $v$. 
	Since any token $t^\prime$ on a vertex $w \in \bfI \cap \Nei{\Tsub{v}{u}}{v}$ is $(\Tsub{w}{v}, \bfI \cap \Tsub{w}{v})$-movable, we can slide $t^\prime$ to some vertex in $\Tsub{w}{v}$ via a reconfiguration sequence $\calS_{w}$ in $\Tsub{w}{v}$. 
	Recall that only the vertex $v$ is adjacent with a vertex in $\Tsub{w}{v}$ and $v \not\in \bfI$.
	Therefore, $\calS_{w}$ can be naturally extended to a reconfiguration sequence $\calS$ in the whole tree $T$ such that $\bfI^\prime \cap \bigl( T \setminus \Tsub{w}{v} \bigr) = \bfI \cap \bigl( T \setminus \Tsub{w}{v} \bigr)$ holds for any independent set $\bfI^\prime \in \calS$ of $T$. 
	Apply this process to all tokens on vertices in $\bfI \cap \Nei{\Tsub{v}{u}}{v}$, and obtain an independent set $\bfI^{\prime \prime}$ of $T$ such that $\bfI^{\prime \prime} \cap \Nei{\Tsub{v}{u}}{v} = \emptyset$.
	Then, we can slide the token $t$ on $u$ to $v$. 
	Thus, $t$ is $(T, \bfI)$-movable. 
\qed
%
\end{proof}



	The following lemma is useful for proving the correctness of our algorithm in Section~\ref{subsec:correctness}.
	\begin{lemma} \label{lem:atmostone}
	Let $\bfI$ be an independent set of a tree $T$ such that all tokens are $(T, \bfI)$-movable, and let $v$ be a vertex such that $v \not\in \bfI$. 
	Then, there exists at most one neighbor $w \in \bfI \cap \Nei{T}{v}$ such that the token on $w$ is $(\Tsub{w}{v}, \bfI \cap \Tsub{w}{v})$-rigid. 
	\end{lemma}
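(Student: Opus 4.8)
The plan is to argue by contradiction using the rigidity characterization of Lemma~\ref{lem:rigid}. Suppose, contrary to the claim, that there are two \emph{distinct} neighbors $w_1, w_2 \in \bfI \cap \Nei{T}{v}$ whose tokens are $(\Tsub{w_1}{v}, \bfI \cap \Tsub{w_1}{v})$-rigid and $(\Tsub{w_2}{v}, \bfI \cap \Tsub{w_2}{v})$-rigid, respectively. I would then derive a contradiction with the hypothesis that every token of $\bfI$ is $(T, \bfI)$-movable, by focusing on the token on $w_1$ and determining the only direction in which it is allowed to slide in the whole tree $T$.

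First I would pin down that escape direction. Since the token on $w_1$ is $(T, \bfI)$-movable, the contrapositive of Lemma~\ref{lem:rigid}(b) yields a neighbor $v^\prime \in \Nei{T}{w_1}$ such that every token on a vertex $w \in \bfI \cap \Nei{\Tsub{v^\prime}{w_1}}{v^\prime}$ is $(\Tsub{w}{v^\prime}, \bfI \cap \Tsub{w}{v^\prime})$-movable (the case $\bfI \cap \Nei{\Tsub{v^\prime}{w_1}}{v^\prime} = \emptyset$ being subsumed vacuously). The neighbors of $w_1$ in $T$ are $v$ together with the children of $w_1$ in $\Tsub{w_1}{v}$. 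I would rule out every child: applying Lemma~\ref{lem:rigid}(b) to the token on $w_1$ \emph{inside} the tree $\Tsub{w_1}{v}$, where it is rigid by assumption, shows that for each child $c$ of $w_1$ there is a blocking vertex $w \in \bfI \cap \Nei{\Tsub{c}{w_1}}{c}$ whose token is $(\Tsub{w}{c}, \bfI \cap \Tsub{w}{c})$-rigid. Such a $c$ therefore violates the all-movable requirement above and cannot serve as $v^\prime$, so the escape direction is forced to be $v^\prime = v$.

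Finally I would exploit this forced direction. Rooting at $w_1$, the subtree $\Tsub{v}{w_1}$ consists of $v$ and everything on the far side of $v$, so $\Nei{\Tsub{v}{w_1}}{v} = \Nei{T}{v} \setminus \{w_1\}$; in particular $w_2 \in \bfI \cap \Nei{\Tsub{v}{w_1}}{v}$. The movability condition for direction $v^\prime = v$ then demands that the token on $w_2$ be $(\Tsub{w_2}{v}, \bfI \cap \Tsub{w_2}{v})$-movable, which directly contradicts the assumption that it is rigid. This contradiction completes the proof.

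I expect the main obstacle to be the bookkeeping of the various rooted subtrees: one must verify that the rigidity of a blocker certified \emph{inside} the isolated subtree $\Tsub{w_1}{v}$ is the very same $(\Tsub{w}{c}, \cdot)$-rigidity that appears in the whole-tree characterization, which hinges on the observation that for a child $c$ the subtree $\Tsub{c}{w_1}$, and hence each $\Tsub{w}{c}$ beneath it, is determined solely by descendants and so is identical whether computed in $T$ or in $\Tsub{w_1}{v}$. I would also dispatch the degenerate case where $w_1$ is a leaf: then $\Tsub{w_1}{v}$ is a single vertex, its token is rigid by Lemma~\ref{lem:rigid}(a), $v$ is the unique neighbor so the escape direction is automatically $v$, and the same final contradiction applies.
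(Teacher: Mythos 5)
Your proof is correct, but it takes a genuinely different route from the paper's. The paper argues \emph{operationally}, about an actual reconfiguration sequence: since the token $t$ on $w_1$ is rigid inside $\Tsub{w_1}{v}$ but $(T,\bfI)$-movable, its first slide must be onto $v$; for that slide to be legal the token on $w_2$ must vacate $w_2$ beforehand, and since $v$ is blocked (its neighbor $w_1$ is still occupied), that token's first slide must go into $\Nei{\Tsub{w_2}{v}}{w_2}$, contradicting its rigidity in $\Tsub{w_2}{v}$. You instead argue \emph{structurally}, never touching sequences: you apply the characterization of Lemma~\ref{lem:rigid}(b) twice --- once in the negated form to the $(T,\bfI)$-movable token on $w_1$ to obtain an escape direction $v^\prime$, and once inside the subtree $\Tsub{w_1}{v}$ (where that token is rigid) to plant a rigid blocker behind every child of $w_1$, forcing $v^\prime = v$; the movability condition in direction $v$ then directly contradicts the assumed rigidity of the token on $w_2$, since $w_2 \in \bfI \cap \Nei{\Tsub{v}{w_1}}{v}$. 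Your version buys rigor in places the paper leaves implicit: the paper's assertions that $t$'s first move must be to $v$, and that the token on $w_2$ must retreat into its own subtree rather than onto $v$, each silently require a restriction-to-subtree argument, whereas you delegate all of that to Lemma~\ref{lem:rigid} and pay for it only with the subtree bookkeeping ($\Tsub{c}{w_1}$ and $\Tsub{w}{c}$ being the same whether computed in $T$ or in $\Tsub{w_1}{v}$), which you correctly identify and discharge, along with the leaf case via Lemma~\ref{lem:rigid}(a). The paper's argument is shorter and more intuitive; yours is more mechanical and self-contained modulo the characterization lemma.
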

	\begin{proof}
	Suppose for a contradiction that there exist two neighbors $w$ and $w^\prime$ in $\bfI \cap \Nei{T}{v}$ such that the tokens on $w$ and $w^\prime$ are $(\Tsub{w}{v}, \bfI \cap \Tsub{w}{v})$-rigid and $(\Tsub{w^\prime}{v}, \bfI \cap \Tsub{w^\prime}{v})$-rigid, respectively.
(See \figurename~\ref{fig:atmostone}.)
	Since the token $t$ on $w$ is $(\Tsub{w}{v}, \bfI \cap \Tsub{w}{v})$-rigid but is $(T,\bfI)$-movable, there is a reconfiguration sequence $\calS_t$ starting from $\bfI$ which slides $t$ to $v$. 
	However, before sliding $t$ to $v$, $\calS_t$ must slide the token $t^\prime$ on $w^\prime$ to some vertex in $\Nei{\Tsub{w^\prime}{v}}{w^\prime}$. 
	This contradicts the assumption that $t^\prime$ is $(\Tsub{w^\prime}{v}, \bfI \cap \Tsub{w^\prime}{v})$-rigid.
	\qed
	\end{proof}
\begin{figure}[t]
	\begin{center}
	\includegraphics[width=0.25\textwidth]{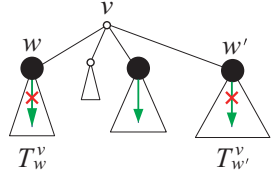}
	\end{center}
	\vspace{-1em}
	\caption{Illustration for Lemma~\ref{lem:atmostone}.}
	\label{fig:atmostone}
\end{figure}

	\subsection{Linear-time algorithm}
	\label{subsec:algorithm}
	
	In this subsection, we describe an algorithm to solve the {\sc sliding token} problem for trees, and estimate its running time;
the correctness of the algorithm will be proved in Section~\ref{subsec:correctness}. 
\smallskip

	Let $T$ be a tree with $n$ vertices, and let $\bfI_b$ and $\bfI_r$ be two given independent sets of $T$. 
	For an independent set $\bfI$ of $T$, we denote by $\ImSet{}$ the set of all vertices in $\bfI$ on which $(T, \bfI)$-rigid tokens are placed. 
	Then, the following algorithm determines whether $\bfI_b \sevstepT{T} \bfI_r$ or not.
\smallskip

	\begin{listing}{{\bf Step~2.}}
	\item[{\bf Step~1.}] Compute $\ImSet{b}$ and $\ImSet{r}$.
								Return ``no'' if $\ImSet{b} \neq \ImSet{r}$; otherwise go to Step 2.
\smallskip

	\item[{\bf Step~2.}] Delete the vertices in $\Neiclosed{T}{\ImSet{b}} = \Neiclosed{T}{\ImSet{r}}$ from $T$, and obtain a forest $\Forest$ consisting of $\numq$ trees $T_1, T_2, \ldots, T_{\numq}$. 
								Return ``yes'' if $\msize{\bfI_b \cap T_j} = \msize{\bfI_r \cap T_j}$ holds for every $j \in \{1, 2, \ldots, \numq \}$; otherwise return ``no.''
	\end{listing}
\smallskip

	We now show that our algorithm above runs in $O(n)$ time. 
	Clearly, Step~2 can be done in $O(n)$ time, and hence we will show that Step~1 can be executed in $O(n)$ time.
\smallskip
	
	We first give the following property of rigid tokens on a tree, which says that deleting movable tokens does not affect the rigidity of the other tokens.
	\begin{lemma} \label{lem:removing-movable-tokens}
	Let $\bfI$ be an independent set of a tree $T$.
	Assume that the token on a vertex $x \in \bfI$ is $(T, \bfI)$-movable.
	Then, for every vertex $u \in I \setminus \{x\}$, the token on $u$ is $(T,\bfI)$-rigid if and only if it is $(T,\bfI \setminus \{x\})$-rigid.
	\end{lemma}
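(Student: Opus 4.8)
The plan is to prove the two implications separately, exploiting that a token is rigid exactly when it is not movable. Write $\bfIp = \bfI \setminus \{x\}$ and fix $u \in \bfI \setminus \{x\}$. It suffices to establish: (i) if the token on $u$ is $(T,\bfI)$-movable then it is $(T,\bfIp)$-movable; and (ii) if the token on $u$ is $(T,\bfIp)$-movable then it is $(T,\bfI)$-movable. Together these yield that the token on $u$ is $(T,\bfI)$-rigid if and only if it is $(T,\bfIp)$-rigid. The overall argument runs by induction on $\msize{V(T)}$, with (i) proved unconditionally and (ii) proved in the inductive step.

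For (i) I would \emph{delete} the token on $x$ from a witnessing sequence. Suppose $\calS = \langle \bfI = \bfI_0, \ldots, \bfI_\ell \rangle$ with $u \notin \bfI_\ell$. Since each step slides exactly one token along an edge and tokens occupy distinct vertices, I can follow the token that starts on $x$ throughout $\calS$; let $x_i$ denote its position in $\bfI_i$. Taking $\bfI_i \setminus \{x_i\}$ and collapsing the repetitions that arise at the steps where this tracked token moves yields a sequence starting at $\bfIp$. Every set in it is a subset of an independent set, hence independent, and every retained step is the slide of a non-tracked token into a vertex that was unoccupied in the corresponding $\bfI_i$ and therefore remains unoccupied after deletion; so each step stays legal. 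As $u \neq x_i$ for all $i$, the final set still omits $u$, proving the token on $u$ is $(T,\bfIp)$-movable.

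Implication (ii) is the crux, and I would handle it through the characterization of Lemma~\ref{lem:rigid}. Because $x \in \bfI$ lies at distance at least two from $u$, it sits in the subtree $\Tsub{v^\ast}{u}$ for the unique neighbor $v^\ast \in \Nei{T}{u}$ on the $ux$-path, and the $\Nei{T}{u}$-conjunct of Lemma~\ref{lem:rigid}(b) is literally unchanged when $x$ is removed for every neighbor of $u$ other than $v^\ast$. Hence only the $v^\ast$-conjunct can differ, and within it only the disjunct for the child $w^\ast$ of $v^\ast$ whose subtree $\Tsub{w^\ast}{v^\ast}$ contains $x$. If the token on $x$ is $(\Tsub{w^\ast}{v^\ast}, \bfI \cap \Tsub{w^\ast}{v^\ast})$-movable, the induction hypothesis applied to the strictly smaller tree $\Tsub{w^\ast}{v^\ast}$ shows that disjunct has the same value with and without $x$, and we are done.

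The genuine obstacle, which I expect to be the main work, is that a token $(T,\bfI)$-movable in the whole tree need not remain movable inside $\Tsub{w^\ast}{v^\ast}$, so the induction hypothesis does not apply verbatim. My plan for this case is to use the \emph{global} movability of $x$: if $x$ cannot move within $\Tsub{w^\ast}{v^\ast}$, then by Lemma~\ref{lem:rigid} its only escape in $T$ must run upward through $w^\ast$ into $v^\ast$, and the only-if analysis of that lemma then forces every token on a vertex of $\Nei{\Tsub{v^\ast}{x}}{v^\ast} \cap \bfI$---in particular the token on $u$---to be movable in its own downward subtree $\Tsub{u}{v^\ast}$. That movability means some neighbor-direction $v' \in \Nei{T}{u} \setminus \{v^\ast\}$ already fails the blocking condition of Lemma~\ref{lem:rigid}(b); since the subtree $\Tsub{v'}{u}$ contains no copy of $x$, this failing conjunct is identical for $\bfI$ and $\bfIp$, so the token on $u$ is non-rigid on both sides and the two answers again agree. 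Thus whenever removing $x$ could affect the $v^\ast$-disjunct, the presence of $x$ has already guaranteed $u$ an $x$-independent escape route, and no machinery beyond Lemma~\ref{lem:rigid} and the distance-two position of $x$ is required to reconcile the cases.
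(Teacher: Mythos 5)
Your overall architecture is sound and genuinely different from the paper's: you handle one direction by the token-deletion (monotonicity) argument, and attack the other by induction on $\msize{V(T)}$, checking that the characterization of Lemma~\ref{lem:rigid}(b) at $u$ keeps the same truth value when $x$ is removed. (The paper instead picks a counterexample vertex $u$ chosen \emph{closest} to $x$, uses Lemma~\ref{lem:rigid} to produce a witness $w$ strictly closer to $x$ that is again rigid for $\bfI$ but movable for $\bfI \setminus \{x\}$, and contradicts minimality.) Your implication (i) and your Case A (the token on $x$ is movable inside $\Tsub{w^\ast}{v^\ast}$, so the induction hypothesis applies to that subtree) are correct. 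The problem is Case B, which you correctly identify as the crux but do not actually prove.

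The gap is the sentence claiming that ``the only-if analysis of [Lemma~\ref{lem:rigid}] forces \ldots the token on $u$ to be movable in its own downward subtree $\Tsub{u}{v^\ast}$.'' Lemma~\ref{lem:rigid} is a static characterization; neither its statement nor its proof constrains what an \emph{arbitrary} reconfiguration sequence of $T$ that vacates $x$ must do, which is what Case B requires. Concretely, nothing you have said excludes that in such a sequence the first move of the token on $u$ is onto $v^\ast$ itself rather than into $\Tsub{u}{v^\ast}$; in that scenario movability of $u$ within $\Tsub{u}{v^\ast}$ simply does not follow. The intermediate claim is in fact true, but establishing it needs real work: one must show that in Case B the vertex $w^\ast$ carries a token that is $(\Tsub{w^\ast}{v^\ast}, \bfI \cap \Tsub{w^\ast}{v^\ast})$-rigid (otherwise every token leaving the subtree through $v^\ast$ is subtree-movable, and removing subtree-movable tokens preserves the rigidity of $x$ there --- an inductive use of the very lemma being proved --- so $x$ could never be vacated), and that this token blocks the slide of $u$ onto $v^\ast$, forcing $u$'s first move to be internal. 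None of this machinery is in your text, and Lemma~\ref{lem:rigid} alone cannot supply it. The fix is actually simpler than your claim: Case B only needs that the token on $u$ is $(T,\bfI)$-movable. This follows directly, because while $u$ is occupied no token can traverse the edge $\{w^\ast, v^\ast\}$ in either direction (the configuration immediately before or after such a slide would have tokens on both $u$ and $v^\ast$), so $\Tsub{w^\ast}{v^\ast}$ evolves only by internal moves and, by subtree-rigidity, $x$ stays occupied; hence any sequence vacating $x$ vacates $u$ first. Your implication (i) then transfers this movability to $\bfIp$, which is all Case B needs. As written, however, Case B is asserted rather than proved, so the proposal has a genuine gap.
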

	\begin{proof}
	The if-part is trivially true, because we cannot make a rigid token movable by adding another token.
	We thus show the only-if-part by contradiction.

	Let $\bfI^\prime = \bfI \setminus \{x\}$.
	Suppose that $u \in \bfI$ is the closest vertex to $x$ such that its token is $(T,\bfI)$-rigid but $(T,\bfI^\prime)$-movable.
	We assume that $x$ is contained in a subtree $\Tsub{v}{u}$ for a neighbor $v$ of $u$. 
(See \figurename~\ref{fig:removing_movable_tokens}.)
	Note that $x \neq v$ since $x, u \in \bfI$.
	Since the token $t_u$ on $u$ is $(T, \bfI)$-rigid, by Lemma~\ref{lem:rigid} the vertex $v \in \Nei{T}{u}$ has at least one neighbor $w \in \bfI \cap \Nei{\Tsub{v}{u}}{v}$ such that the token $t_w$ on $w$ is $(\Tsub{w}{v}, \bfI \cap \Tsub{w}{v})$-rigid.
	Indeed, $t_w$ is $(T, \bfI)$-rigid, because $t_u$ is assumed to be $(T, \bfI)$-rigid.
	Thus, we know that $x \neq w$ since the token $t_x$ on $x$ is $(T, \bfI)$-movable.

	First, consider the case where $x$ is contained in a subtree $\Tsub{w^\prime}{v}$ for some neighbor $w^\prime$ of $v$ other than $w$. 
(See \figurename~\ref{fig:removing_movable_tokens}(a).)
	Then, $\bfI^\prime \cap \Tsub{w}{v} = \bfI \cap \Tsub{w}{v}$.
	Since $t_w$ is $(\Tsub{w}{v}, \bfI \cap \Tsub{w}{v})$-rigid, it is also $(\Tsub{w}{v}, \bfI^\prime \cap \Tsub{w}{v})$-rigid.
	Therefore, by Lemma~\ref{lem:rigid} the token $t_u$ is $(T,\bfI^\prime)$-rigid.
	This contradicts the assumption that $t_u$ is $(T,\bfI^\prime)$-movable.


	We thus consider the case where $x \in V(\Tsub{w}{v}) \setminus \{w\}$. 
(See \figurename~\ref{fig:removing_movable_tokens}(b).)
	Recall that $\bfI^\prime$ is obtained by deleting only $x$ from $\bfI$. 
	Then, since $t_u$ is $(T, \bfI)$-rigid but $(T, \bfI^\prime)$-movable, it must be slid from $u$ to $v$. 
	However, before executing this token-slide, we have to slide $t_w$ to some vertex in $\Nei{\Tsub{w}{v}}{w}$.
	Thus, $t_w$ is $(\Tsub{w}{v}, \bfI^\prime \cap \Tsub{w}{v})$-movable, and hence it is also $(T, \bfI^\prime)$-movable.
	Since $t_w$ is $(T, \bfI)$-rigid and $w$ is strictly closer to $x \in V(\Tsub{w}{v})$ than $u$, this contradicts the assumption that $u$ is the closest vertex to $x$ such that its token is $(T,\bfI)$-rigid but $(T,\bfI^\prime)$-movable.
%
%
%
%
%
%
\qed
\end{proof}

\begin{figure}[t]
	\begin{center}
	\includegraphics[width=0.75\textwidth]{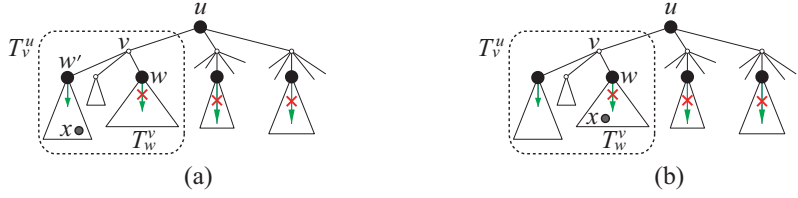}
	\end{center}
	\vspace{-1em}
	\caption{Illustration for Lemma~\ref{lem:removing-movable-tokens}.}
	\label{fig:removing_movable_tokens}
\end{figure}

	Then, the following lemma proves that Step~1 can be executed in $O(n)$ time.  
	\begin{lemma} \label{lem:all-rigid-tokens}
	For an independent set $\bfI$ of a tree $T$ with $n$ vertices, $\ImSet{}$ can be computed in $O(n)$ time.
	\end{lemma}
	\begin{proof}
	Lemma~\ref{lem:removing-movable-tokens} implies that the set $\ImSet{}$ of all $(T, \bfI)$-rigid tokens in $\bfI$ can be found by removing all $(T, \bfI)$-movable tokens in $\bfI$.
	Observe that, if $\bfI$ contains $(T, \bfI)$-movable tokens, then at least one of them can be immediately slid to one of its neighbors.
	That is, there is a token on $u \in \bfI$ which has a neighbor $w \in \Nei{T}{u}$ such that $\Nei{T}{w} \cap \bfI = \{u\}$.
	Then, the following algorithm efficiently finds and removes such tokens iteratively.
	\smallskip

	\begin{listing}{{\bf Step~B.}}
	\item[{\bf Step~A.}] Define and compute $\deg_{\bfI}(w) = |\Nei{T}{w} \cap \bfI|$ for all vertices $w \in V(T)$.
	\smallskip

	\item[{\bf Step~B.}] Define and compute $M = \{u \in \bfI \mid \exists w \in \Nei{T}{u} \mbox{ such that }\deg_{\bfI}(w) = 1\}$.
	\smallskip
	       
	\item[{\bf Step~C.}] Repeat the following steps (\one)--(\three) until $M = \emptyset$.
		\begin{listing}{{\bf (\three)}}
		\item[{\bf (\one)}] Select an arbitrary vertex $u \in M$, and remove it from $M$ and $\bfI$.
		\item[{\bf (\two)}] Update $\deg_{\bfI}(w) := \deg_{\bfI}(w) - 1$ for each neighbor $w \in \Nei{T}{u}$.
		\item[{\bf (\three)}]  If $\deg_{\bfI}(w)$ becomes one by the update (\two) above, then add the vertex $u^\prime \in \Nei{T}{w} \cap \bfI$ into $M$.
		\end{listing}
  \smallskip

  \item[{\bf Step~D.}] Output $\bfI$ as the set $\ImSet{}$.
\end{listing}
\smallskip

	Clearly, Steps A, B and D can be done in $O(n)$ time.
	We now show that Step~C takes only $O(n)$ time.
	Each vertex in $\bfI$ can be selected at most once as $u$ at Step~C-(\one).
	For the selected vertex $u$, Step~C-(\two) takes $O(\deg_{T}(u))$ time for updating $\deg_{I}(w)$ of its neighbors $w \in \Nei{T}{u}$.
	Each vertex in $V(T) \setminus \bfI$ can be selected at most once as $w$ at Step~C-(\three).
	For the selected vertex $w$, Step~C-(\three) takes $O(\deg_{T}(w))$ time for finding $u^\prime \in \Nei{T}{w} \cap \bfI$.
	Therefore, Step~C takes $O \Bigl(\sum_{v \in V(T)} \deg_{T}(v) \Bigr) = O(n)$ time in total.
\qed
\end{proof}

	Therefore, Step~1 of our algorithm can be done in $O(n)$ time, and hence the algorithm runs in linear time in total.

	\subsection{Correctness of the algorithm}
	\label{subsec:correctness}

	In this subsection, we prove that the $O(n)$-time algorithm in Section~\ref{subsec:algorithm} correctly determines whether $\bfI_b \sevstepT{T} \bfI_r$ or not, for two given independent sets $\bfI_b$ and $\bfI_r$ of a tree $T$.
\smallskip
	
	We first show the correctness of Step~1.
	\begin{lemma} \label{lem:step1}
	Suppose that $\ImSet{b} \neq \ImSet{r}$ for two given independent sets $\bfI_b$ and $\bfI_r$ of a tree $T$. 
	Then, it is a no-instance.
	\end{lemma}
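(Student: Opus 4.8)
The plan is to prove the contrapositive: if $\bfI_b \sevstepT{T} \bfI_r$, then $\ImSet{b} = \ImSet{r}$. The entire argument rests on two elementary properties of the reachability relation $\sevstepT{T}$, both immediate from the definition of a reconfiguration sequence. First, it is \emph{symmetric}: reversing a reconfiguration sequence again gives a valid reconfiguration sequence, since a single token-slide along an edge $\{u,v\}$ is undone by sliding back along the same edge. Second, it is \emph{transitive}: concatenating a sequence from $\bfI_1$ to $\bfI_2$ with one from $\bfI_2$ to $\bfI_3$ yields a sequence from $\bfI_1$ to $\bfI_3$. Neither property uses the tree structure, so the whole lemma will follow without any tree-specific reasoning.

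First I would establish the inclusion $\ImSet{b} \subseteq \ImSet{r}$. Take any $u \in \ImSet{b}$, so the token on $u$ is $(T,\bfI_b)$-rigid, meaning $u \in \bfI^\prime$ for \emph{every} independent set $\bfI^\prime$ with $\bfI_b \sevstepT{T} \bfI^\prime$. To show $u \in \ImSet{r}$, I would pick an arbitrary independent set $J$ of $T$ with $\bfI_r \sevstepT{T} J$ and argue $u \in J$. Since we are assuming $\bfI_b \sevstepT{T} \bfI_r$, transitivity gives $\bfI_b \sevstepT{T} J$; the $(T,\bfI_b)$-rigidity of the token on $u$ then forces $u \in J$. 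As $J$ was arbitrary, the token on $u$ is $(T,\bfI_r)$-rigid, i.e.\ $u \in \ImSet{r}$.

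Next I would obtain the reverse inclusion $\ImSet{r} \subseteq \ImSet{b}$ for free. Reversing the reconfiguration sequence witnessing $\bfI_b \sevstepT{T} \bfI_r$ yields $\bfI_r \sevstepT{T} \bfI_b$, so the symmetric situation holds, and the identical argument with the roles of $\bfI_b$ and $\bfI_r$ interchanged gives $\ImSet{r} \subseteq \ImSet{b}$. Combining the two inclusions gives $\ImSet{b} = \ImSet{r}$, which is exactly the contrapositive of the lemma: whenever $\ImSet{b} \neq \ImSet{r}$, no reconfiguration sequence from $\bfI_b$ to $\bfI_r$ can exist, so the instance is a no-instance.

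I do not anticipate a genuine obstacle. The one point worth stating carefully is the intuitively clear but formally necessary observation that rigidity is \emph{preserved along reconfiguration}, i.e.\ a token that is rigid under $\bfI_b$ remains rigid under any $\bfI_r$ reachable from it; this is precisely where transitivity is used. Everything else is bookkeeping with the definitions of rigid tokens and of $\ImSet{}$.
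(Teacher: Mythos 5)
Your proposal is correct and takes essentially the same approach as the paper: the paper's proof consists of the single observation that $\ImSet{b} = \ImSetp$ holds for \emph{any} independent set $\bfI^\prime$ with $\bfI_b \sevstepT{T} \bfI^\prime$, which is precisely the invariance of the rigid-token set that you establish. You simply spell out, via the two inclusions and the symmetry and transitivity of $\sevstepT{T}$, the details that the paper compresses into the phrase ``by the definition of rigid tokens.''
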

	\begin{proof}
	By the definition of rigid tokens, $\ImSet{b} = \ImSetp$ holds for \emph{any} independent set $\bfI^\prime$ of $T$ such that $\bfI_b \sevstepT{T} \bfI^\prime$. 
	Therefore, there is no reconfiguration sequence between $\bfI_b$ and $\bfI_r$ if $\ImSet{r} \neq \ImSet{b}$.
	\qed
	\end{proof}

	We then show the correctness of Step~2. 
	We first claim that deleting the vertices with rigid tokens together with their neighbors does not affect the reconfigurability. 
	\begin{lemma} \label{lem:step2}
	Suppose that $\ImSet{b} = \ImSet{r}$ for two given independent sets $\bfI_b$ and $\bfI_r$ of a tree $T$, and let $\Forest$ be the forest obtained by deleting the vertices in $\Neiclosed{T}{\ImSet{b}} = \Neiclosed{T}{\ImSet{r}}$ from $T$.
	Then, $\bfI_b \sevstepT{T} \bfI_r$ if and only if $\bfI_b \cap \Forest \sevstepT{\Forest} \bfI_r \cap \Forest$.
	Furthermore, all tokens in $\bfI_b \cap \Forest$ are $(\Forest, \bfI_b \cap \Forest)$-movable, and all tokens in $\bfI_r \cap \Forest$ are $(\Forest, \bfI_r \cap \Forest)$-movable.
	\end{lemma}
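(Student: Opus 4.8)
The plan is to exploit the fact that the deleted set $\Neiclosed{T}{\ImSet{b}}$ completely insulates the rigid tokens from the movable ones, so that reconfiguring in $T$ and reconfiguring in $\Forest$ become the same process. Write $R := \ImSet{b} = \ImSet{r}$ for the common set of rigid vertices. First I would record a structural fact: $\bfI_b \cap \Nei{T}{R} = \emptyset$, since any vertex of $\Nei{T}{R}$ is adjacent to a vertex of $R \subseteq \bfI_b$ and so cannot lie in the independent set $\bfI_b$. Hence $\bfI_b \subseteq R \cup V(\Forest)$, and as $R \subseteq \bfI_b$ we get $\bfI_b \cap \Forest = \bfI_b \setminus R$; that is, $\bfI_b \cap \Forest$ is exactly the set of $(T,\bfI_b)$-movable tokens. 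The same holds for $\bfI_r$, and in particular $R \cup (\bfI_b \cap \Forest) = \bfI_b$ and $R \cup (\bfI_r \cap \Forest) = \bfI_r$.

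The key step is a \emph{trapping} claim: along \emph{any} reconfiguration sequence $\calS$ starting from $\bfI_b$, (i)~every rigid token stays on its vertex, and (ii)~no token ever occupies a vertex of $\Nei{T}{R}$. Part~(i) is immediate from the definition of rigidity, since $v \in \bfI^\prime$ for every $\bfI^\prime$ with $\bfI_b \sevstepT{T} \bfI^\prime$ forces the token on $v \in R$ never to vacate $v$. For~(ii), a token on $w \in \Nei{T}{R}$ would be adjacent to the permanently present rigid token on its neighbor in $R$, contradicting independence. Combining the two, each single slide $\bfI_{i-1} \onestep \bfI_i$ moves a non-rigid token whose current and target vertices both avoid $R$ and $\Nei{T}{R}$, hence lie in $V(\Forest) = V(T) \setminus \Neiclosed{T}{R}$; since $\Forest$ is an induced subgraph, the slide is along an edge of $\Forest$. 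Restricting $\calS$ to $\Forest$ therefore yields a genuine reconfiguration sequence in $\Forest$, establishing that $\bfI_b \sevstepT{T} \bfI_r$ implies $\bfI_b \cap \Forest \sevstepT{\Forest} \bfI_r \cap \Forest$.

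For the converse I would lift a sequence $\bfI_b \cap \Forest \sevstepT{\Forest} \bfI_r \cap \Forest$ back to $T$ by keeping all of $R$ fixed. For each independent set $J$ in the $\Forest$-sequence, $R \cup J$ is independent in $T$: there is no edge between $V(\Forest)$ and $R$ because $V(\Forest)$ is disjoint from $\Neiclosed{T}{R}$, and $R$ is itself independent. Each slide in $\Forest$ is a slide along the same edge in $T$, so the lifted sequence is valid, and its endpoints are $R \cup (\bfI_b \cap \Forest) = \bfI_b$ and $R \cup (\bfI_r \cap \Forest) = \bfI_r$, giving $\bfI_b \sevstepT{T} \bfI_r$.

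Finally, for the ``furthermore'' claim, let $v \in \bfI_b \cap \Forest$. By the first paragraph its token is $(T, \bfI_b)$-movable, so there is an independent set $\bfI^\prime$ of $T$ with $v \notin \bfI^\prime$ and $\bfI_b \sevstepT{T} \bfI^\prime$. Restricting this sequence to $\Forest$ exactly as in the trapping claim gives $\bfI_b \cap \Forest \sevstepT{\Forest} \bfI^\prime \cap \Forest$ with $v \notin \bfI^\prime \cap \Forest$, so the token is $(\Forest, \bfI_b \cap \Forest)$-movable; the argument for $\bfI_r$ is identical. I expect the only delicate point to be the trapping claim, and within it part~(ii)---verifying that a movable token can never escape onto the boundary $\Nei{T}{R}$; once that is secured, both the restriction and the lifting are routine.
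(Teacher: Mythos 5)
Your proof is correct and follows essentially the same route as the paper's: your ``trapping claim'' is precisely the paper's observation that every intermediate independent set $\bfI$ satisfies $\ImSet{b} \subseteq \bfI$ and avoids $\Nei{T}{\ImSet{b}}$, which justifies restricting a sequence in $T$ to $\Forest$; the lifting argument for the converse and the use of restriction to prove the ``furthermore'' claim also match the paper's proof. Your explicit verification that $\bfI_b \cap \Forest = \bfI_b \setminus \ImSet{b}$ (so the lifted sequence really ends at $\bfI_b$ and $\bfI_r$) is a small point the paper leaves implicit.
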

	\begin{proof}
	We first prove the if-part. 
	Suppose that $\bfI_b \cap \Forest \sevstepT{\Forest} \bfI_r \cap \Forest$, and hence there exists a reconfiguration sequence $\calS_{\Forest}$ between $\bfI_b \cap \Forest$ and $\bfI_r \cap \Forest$. 
	Then, for each independent set $\bfI \in \calS_{\Forest}$ of $\Forest$, the vertex-subset $\ImSet{b} \cup \bfI = \ImSet{r} \cup \bfI$ forms an independent set of $T$ since $\Forest$ is obtained by deleting all vertices in $\Neiclosed{T}{\ImSet{b}} = \Neiclosed{T}{\ImSet{r}}$. 
	Therefore, $\calS_{\Forest}$ can be extended to a reconfiguration sequence between $\bfI_b$ and $\bfI_r$ of $T$. 
	We thus have $\bfI_b \sevstepT{T} \bfI_r$. 
	
	We then prove the only-if-part. 
	Suppose that $\bfI_b \sevstepT{T} \bfI_r$, and hence there exists a reconfiguration sequence $\calS_T$ between $\bfI_b$ and $\bfI_r$. 
	Then, for any independent set $\bfI \in \calS_T$, we have $\bfI_b \sevstepT{T} \bfI$ and $\bfI \sevstepT{T} \bfI_r$, and hence by the definition of rigid tokens $\ImSet{b} = \ImSet{r} \subseteq \bfI$ holds. 
	Furthermore, $\bfI \setminus \ImSet{b} = \bfI \setminus \ImSet{r}$ is a vertex-subset of $V(\Forest)$ since no token can be placed on any neighbor of $\ImSet{b} = \ImSet{r}$. 
	Therefore, $\bfI \setminus \ImSet{b} = \bfI \setminus \ImSet{r}$ forms an independent set of $\Forest$. 
	For two consecutive independent sets $\bfI_{i-1}$ and $\bfI_i$ in $\calS_T$, let $\bfI_{i-1} \setminus \bfI_{i} = \{u\}$ and $\bfI_{i} \setminus \bfI_{i-1} = \{v\}$. 
	Since $u \notin \bfI_{i}$ and $v \notin \bfI_{i-1}$, neither $u$ nor $v$ are in $\ImSet{b} = \ImSet{r}$.
	Therefore, we have $u, v \in V(\Forest)$, and hence the edge $\{u, v\}$ is in $E(\Forest)$. 
	Then, we can obtain a reconfiguration sequence between $\bfI_b \cap \Forest$ and $\bfI_r \cap \Forest$ by replacing all independent sets $\bfI \in \calS_T$ with $\bfI \cap \Forest$. 
	We thus have $\bfI_b \cap \Forest \sevstepT{\Forest} \bfI_r \cap \Forest$. 
\smallskip
	
	We finally prove that all tokens in $\bfI_b \cap \Forest$ are $(\Forest, \bfI_b \cap \Forest)$-movable.
(The proof for the tokens in $\bfI_r \cap \Forest$ is the same.)
	Notice that each token $t$ on a vertex $v$ in $\bfI_b \cap \Forest$ is $(T, \bfI_b)$-movable; otherwise $t \in \ImSet{b}$. 
	Therefore, there exists an independent set $\bfI^\prime$ of $T$ such that $v \not\in \bfI^\prime$ and $\bfI_b \sevstepT{T} \bfI^\prime$. 
	Then, $\bfI_b \cap \Forest \sevstepT{\Forest} \bfI^\prime \cap \Forest$ as we have proved above, and hence $t$ is $(\Forest, \bfI_b \cap \Forest)$-movable.
	\qed
	\end{proof}

	Suppose that $\ImSet{b} = \ImSet{r}$ for two given independent sets $\bfI_b$ and $\bfI_r$ of a tree $T$.
	Let $\Forest$ be the forest consisting of $\numq$ trees $T_1, T_2, \ldots, T_{\numq}$, which is obtained from $T$ by deleting the vertices in $\Neiclosed{T}{\ImSet{b}} = \Neiclosed{T}{\ImSet{r}}$. 
	Since we can slide a token only along an edge of $\Forest$, we clearly have $\bfI_b \cap \Forest \sevstepT{\Forest} \bfI_r \cap \Forest$ if and only if $\bfI_b \cap T_j \sevstepT{T_j} \bfI_r \cap T_j$ for all $j \in \{1, 2, \ldots, \numq \}$.
	Furthermore, Lemma~\ref{lem:step2} implies that, for each $j \in \{1,2, \ldots, \numq\}$, all tokens in $\bfI_b \cap T_j$ are $(T_j, \bfI_b \cap T_j)$-movable;
similarly, all tokens in $\bfI_r \cap T_j$ are $(T_j, \bfI_r \cap T_j)$-movable. 
\medskip

	We now give our second key lemma, which completes the correctness proof of our algorithm. 
	\begin{lemma} \label{lem:secondkey}
	Let $\bfI_b$ and $\bfI_r$ be two independent sets of a tree $T$ such that all tokens in $\bfI_b$ and $\bfI_r$ are $(T, \bfI_b)$-movable and $(T, \bfI_r)$-movable, respectively.
	Then, $\bfI_b \sevstepT{T} \bfI_r$ if and only if $\msize{\bfI_b} = \msize{\bfI_r}$. 
	\end{lemma}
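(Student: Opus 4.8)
The only-if direction is immediate: a single token-slide preserves the number of tokens, so every independent set in a reconfiguration sequence has the same cardinality; hence $\bfI_b \sevstepT{T} \bfI_r$ forces $\msize{\bfI_b} = \msize{\bfI_r}$. The content is the if-direction, and the plan is to prove it by induction on $n = \msize{V(T)}$. Since $\sevstepT{T}$ is symmetric (any reconfiguration sequence may be reversed, each slide being reversible along the same edge), it suffices to exhibit a single canonical independent set $\bfI^*$ of size $k = \msize{\bfI_b}$ and show that both $\bfI_b$ and $\bfI_r$ reach it, and then concatenate $\bfI_b \sevstepT{T} \bfI^* \sevstepT{T} \bfI_r$. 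So the goal reduces to: every movable independent set of size $k$ can be slid to one fixed target of size $k$.

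For the inductive step I would pick $\ell$ to be a leaf of maximum depth and let $p$ be its unique neighbor. Maximality forces every neighbor of $p$ other than its parent $g$ to be a leaf, so $p$ sits at the apex of a cluster of leaves $\ell_1 = \ell, \ldots, \ell_s$ together with one edge to the rest of the tree through $g$ (if $g$ does not exist then $T$ is a star, a trivial base case). The key structural fact, obtained from Lemma~\ref{lem:atmostone} applied at $p$ when $p \notin \bfI$, is that a movable independent set carries at most one token on the whole cluster $\{p, \ell_1, \ldots, \ell_s\}$: each occupied leaf $\ell_i$ carries a token that is trivially rigid in its one-vertex subtree $\Tsub{\ell_i}{p}$, and Lemma~\ref{lem:atmostone} permits at most one such rigid-below neighbor of $p$; and $p$ cannot be occupied simultaneously with a leaf. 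Thus the cluster is always in one of three local states—empty, one leaf occupied, or $p$ occupied. This suggests contracting the cluster: pass to $T^{-} = T \setminus \{\ell_1, \ldots, \ell_s\}$ on $n - s$ vertices, in which $p$ becomes a pendant vertex attached to $g$, and represent the single cluster-token (if present) by a token on $p$. Because the cluster holds at most one token, the contraction preserves $\msize{\bfI}$.

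The plan is then to show the contraction is faithful in both directions: a slide within $T^{-}$ lifts to a sequence of slides in $T$ that merely rearranges the cluster token, and conversely any reconfiguration in $T$ projects to one in $T^{-}$. Using Lemma~\ref{lem:rigid} and Lemma~\ref{lem:removing-movable-tokens} I would check that the contracted configurations are still movable in $T^{-}$, so the induction hypothesis applies to $T^{-}$ and returns $\bfI_b^{-} \sevstepT{T^{-}} \bfI_r^{-}$; lifting this yields the sequence we want in $T$.

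The main obstacle—and the one place where movability is used in earnest—is the local state in which a leaf $\ell_i$ is occupied while the gateway neighbor $g$ also carries a token. Then the single cluster token cannot be pushed onto $p$, its only exit, without first displacing the token on $g$; consequently the ``leaf occupied'' and ``$p$ occupied'' states are not interchangeable in place, and the naive projection that simply puts a token on $p$ can collide with the token on $g$, breaking independence in $T^{-}$. Resolving this requires a normalization step that first slides the token on $g$ aside—possible precisely because every token is $(T,\bfI)$-movable—so that the cluster token can be parked on $p$ and the projection becomes legitimate. Making this normalization consistent for $\bfI_b$ and $\bfI_r$, and verifying that it neither changes token counts nor destroys movability elsewhere (so the induction hypothesis keeps applying), is the crux of the argument. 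Once this gateway interaction is handled, the induction on $n$ closes and gives $\bfI_b \sevstepT{T} \bfI^* \sevstepT{T} \bfI_r$.
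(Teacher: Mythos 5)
Your skeleton matches the paper's in spirit: both arguments exploit reversibility to aim $\bfI_b$ and $\bfI_r$ at a common configuration, both induct after reducing the tree at a deepest leaf's neighborhood (your cluster vertex $p$ with its leaves is exactly the paper's ``safe'' degree-$1$ vertex structure), and your cluster invariant --- at most one token on $\{p,\ell_1,\dots,\ell_s\}$, via Lemma~\ref{lem:atmostone} --- is correct, as is the easy normalization that vacates the gateway $g$ (truncate a sequence witnessing movability of the token on $g$ at the first moment $g$ becomes empty; movability of all tokens is preserved along any reconfiguration sequence). The lifting of a $T^{-}$-sequence back to $T$ is also unproblematic.

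However, there is a genuine gap, and it is precisely the step you set aside with ``I would check'' and ``the crux of the argument'': to invoke the induction hypothesis on $T^{-}$ you must prove that \emph{every} token of the contracted configuration is $(T^{-},\bfI^{-})$-movable, and this does not follow from Lemma~\ref{lem:rigid} together with Lemma~\ref{lem:removing-movable-tokens}. Lemma~\ref{lem:removing-movable-tokens} concerns deleting a \emph{token} while the tree stays fixed; your contraction deletes \emph{vertices} $\ell_1,\dots,\ell_s$ (and relocates a cluster token to $p$), which changes the family of subtrees $\Tsub{w}{v}$ on which rigidity is defined. The danger is real: movability of a distant token in $T$ may be witnessed only by sequences that use the deleted leaves as parking spots --- for example, a token parked on $\ell_i$ blocks only $p$ and leaves $g$ free for another token, whereas its image on $p$ in $T^{-}$ blocks $g$ --- so such witnessing sequences do not project to $T^{-}$, and one must argue that rigidity is nevertheless not created by the contraction. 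This preservation statement is the analogue of the paper's Lemma~\ref{lem:delete1}, whose proof is the technical heart of the entire correctness argument: a closest-rigid-token argument with a two-case analysis around the gateway vertex, repeatedly invoking Lemma~\ref{lem:atmostone}; indeed the paper deletes the leaf's neighbor (your $p$) together with the leaves precisely so that this analysis goes through. Until you state and prove the corresponding lemma for your pendant-vertex contraction, the induction does not close, so what you have is a plausible plan rather than a proof.
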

	
	The only-if-part of Lemma~\ref{lem:secondkey} is trivial, and hence we prove the if-part.
	In our proof, we do \emph{not} reconfigure $\bfI_b$ into $\bfI_r$ directly, but reconfigure both $\bfI_b$ and $\bfI_r$ into some independent set $\bfIstar$ of $T$. 
	Note that, since any reconfiguration sequence is reversible, $\bfI_b \sevstepT{T} \bfIstar$ and $\bfI_r \sevstepT{T} \bfIstar$ imply that $\bfI_b \sevstepT{T} \bfI_r$.

	We say that a degree-$1$ vertex $v$ of $T$ is \emph{safe} if its unique neighbor $u$ has at most one neighbor $w$ of degree more than one.
(See \figurename~\ref{fig:safe}.) 
	Note that any tree has at least one safe degree-$1$ vertex. 
\begin{figure}[t]
	\begin{center}
	\includegraphics[width=0.23\textwidth]{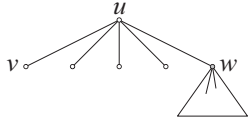}
	\end{center}
	\vspace{-1em}
	\caption{A degree-$1$ vertex $v$ of a tree $T$ which is safe.}
	\label{fig:safe}
\end{figure}

	As the first step of the if-part proof, we give the following lemma. 
	\begin{lemma} \label{lem:degree1}
	Let $\bfI$ be an independent set of a tree $T$ such that all tokens in $\bfI$ are $(T, \bfI)$-movable, and let $v$ be a safe degree-$1$ vertex of $T$. 
	Then, there exists an independent set $\bfIp$ such that $v \in \bfIp$ and $\bfI \sevstepT{T} \bfIp$.
	\end{lemma}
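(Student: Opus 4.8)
The plan is to take a token of $\bfI$ nearest to $v$ and walk it up the unique path to $v$, invoking Lemma~\ref{lem:atmostone} only to get the first move going. First I would clear the trivial cases. If $v \in \bfI$ we may take $\bfIp = \bfI$. Otherwise $v \notin \bfI$; let $u$ be the unique neighbor of $v$. If $u \in \bfI$, then sliding that token from $u$ to $v$ yields an independent set (the only neighbor of $v$, namely $u$, is vacated), so we are done. Hence I assume $u, v \notin \bfI$ and also $\bfI \neq \emptyset$ — the only case in which a token can possibly reach $v$, and the only case in which the lemma is applied. Pick $z \in \bfI$ with $d = \dist(v,z)$ minimum, necessarily $d \ge 2$, and write the $vz$-path as $v = p_0, p_1 = u, \ldots, p_d = z$. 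Rooting $T$ at $v$, minimality of $d$ forces every vertex within distance $d-1$ of $v$ to be empty; in particular $p_0, \ldots, p_{d-1} \notin \bfI$, and for every $i \le d-2$ each off-path neighbor of $p_i$ (being at distance $i+1 \le d-1$) is empty. So $p_{d-1}$ is the only path-vertex that might have token-bearing neighbors off the path, and these sit at distance exactly $d$.

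The delicate step is to place a token on $p_{d-1}$. Since $p_{d-1} \notin \bfI$ and every token of $\bfI$ is $(T,\bfI)$-movable, Lemma~\ref{lem:atmostone} guarantees that at most one neighbor of $p_{d-1}$ carries a token that is rigid within the subtree hanging from it. Every other token-bearing child $c$ of $p_{d-1}$ is thus $(\Tsub{c}{p_{d-1}}, \bfI \cap \Tsub{c}{p_{d-1}})$-movable, so its token can be slid off $c$ by a reconfiguration confined to $\Tsub{c}{p_{d-1}}$, which never touches $p_{d-1}$ or the path. I would perform all such moves, clearing every token-bearing child of $p_{d-1}$ except a single chosen one (the subtree-rigid child if it exists, otherwise $z$ itself). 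Because $p_{d-2}$ is also empty, that remaining token can now be slid up onto $p_{d-1}$.

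The remainder is routine. The moves above were all confined to subtrees hanging below $p_{d-1}$, so $p_0, \ldots, p_{d-2}$ and their off-path neighbors are still empty. I would then slide the token straight down, from $p_{d-1}$ to $p_{d-2}$, then to $p_{d-3}$, and so on to $p_0 = v$; each individual slide from $p_i$ to $p_{i-1}$ is legal because the neighbors of $p_{i-1}$ other than $p_i$ are $p_{i-2}$ and off-path vertices at distance $i \le d-1$, all empty. The resulting independent set contains $v$, and the whole process exhibits a reconfiguration $\bfI \sevstepT{T} \bfIp$ with $v \in \bfIp$, as required.

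The one real obstacle is the first slide, where several children of $p_{d-1}$ may hold tokens simultaneously and block access to $p_{d-1}$; Lemma~\ref{lem:atmostone} is exactly what rules out more than one of them being stuck at the top of its own subtree, so a channel for one token up to $p_{d-1}$ can always be opened. I note that this argument relies on the movability of all tokens but not directly on the safeness of $v$; the role of safeness is to let this lemma be applied repeatedly, leaf by leaf, in the proof of Lemma~\ref{lem:secondkey}.
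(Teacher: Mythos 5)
Your proof is correct and follows essentially the same route as the paper's: take a token closest to $v$, apply Lemma~\ref{lem:atmostone} at the vertex $p_{d-1}$ to single out the at most one neighbor whose token is rigid in its hanging subtree, slide all other closest tokens off into their own subtrees, and then slide the chosen token along the now-empty path down to $v$. Your side remarks (the implicit $\bfI \neq \emptyset$ assumption, and the observation that safeness of $v$ is not actually used in this lemma but only in the later deletion argument) are accurate and do not alter the argument.
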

	\begin{proof}
	Suppose that $v \not\in \bfI$; otherwise the lemma clearly holds. 
	We will show that one of the closest tokens from $v$ can be slid to $v$.
	Let $M = \{w \in \bfI \mid \dist(v,w) = \min_{x \in \bfI} \dist(v,x)\}$.
	Let $w$ be an arbitrary vertex in $M$, and let $P = (p_{0} = v, p_{1}, \dots, p_{\ell} = w)$ be the $vw$-path in $T$.
(See \figurename~\ref{fig:nearest}.)
	If $\ell = 1$ and hence $p_1 \in \bfI$, then we can simply slide the token on $p_1$ to $v$. 
	Thus, we may assume that $\ell \ge 2$. 

	We note that no token is placed on the vertices $p_{0}, \dots, p_{\ell-1}$ and the neighbors of $p_{0}, \dots, p_{\ell-2}$, 
because otherwise the token on $w$ is not closest to $v$. 
	Let $M^\prime = M \cap \Nei{T}{p_{\ell-1}}$.
	Since $p_{\ell -1} \not\in \bfI$, by Lemma~\ref{lem:atmostone} there exists at most one vertex $w^\prime \in M^\prime$ such that the token on $w^\prime$ is $(\Tsub{w^\prime}{p_{\ell-1}}, \bfI \cap \Tsub{w^\prime}{p_{\ell-1}})$-rigid. 
	We choose such a vertex $w^\prime$ if it exists, otherwise choose an arbitrary vertex in $M^\prime$ and regard it as $w^\prime$. 
	
	Since all tokens on the vertices $w^{\prime \prime}$ in $M^\prime \setminus \{w^\prime\}$ are $(\Tsub{w^{\prime \prime}}{p_{\ell-1}}, \bfI \cap \Tsub{w^{\prime \prime}}{p_{\ell-1}})$-movable, we first slide the tokens on $w^{\prime \prime}$ to some vertices in $\Tsub{w^{\prime \prime}}{p_{\ell-1}}$.
	Then, we can slide the token on $w^\prime$ to $v$ along the path $P$. 
	In this way, we can obtain an independent set $\bfI^\prime$ such that $v \in \bfIp$ and $\bfI \sevstepT{T} \bfIp$.
	\qed
	\end{proof}
\begin{figure}[t]
	\begin{center}
	\includegraphics[width=0.35\textwidth]{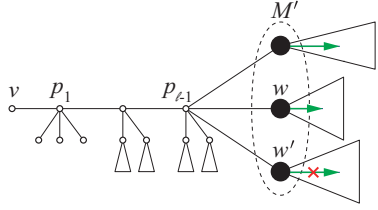}
	\end{center}
	\vspace{-1em}
	\caption{Illustration for Lemma~\ref{lem:degree1}.}
	\label{fig:nearest}
\end{figure}

	We then prove that deleting a safe degree-$1$ vertex with a token does not affect the movability of the other tokens. 
(See also \figurename~\ref{fig:delete1}.) 
	\begin{lemma} \label{lem:delete1}
	Let $v$ be a safe degree-$1$ vertex of a tree $T$, and let $\Tminus$ be the subtree of $T$ obtained by deleting $v$, its unique neighbor $u$, and the resulting isolated vertices.
	Let $\bfI$ be an independent set of $T$ such that $v \in \bfI$ and all tokens are $(T, \bfI)$-movable.
	Then, all tokens in $\bfI \setminus \{v\}$ are $(\Tminus, \bfI \setminus \{v\})$-movable.
	\end{lemma}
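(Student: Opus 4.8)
The plan is to argue by contradiction: I will show that if some token on a vertex $x \in \bfI \cap V(\Tminus)$ were $(\Tminus, \bfI \cap \Tminus)$-rigid, then it would already be $(T,\bfI)$-rigid, contradicting the hypothesis that every token of $\bfI$ is $(T,\bfI)$-movable. First I would pin down $\Tminus$ explicitly. By safeness, the vertices isolated after deleting $v$ and $u$ are exactly the leaves of $u$ other than $v$, so either $u$ has a unique neighbor $w$ of degree more than one and $\Tminus = \Tsub{w}{u}$, or $u$ has no such neighbor and $\Tminus$ is empty (in which case the statement is vacuous). Note that $u \notin \bfI$ since $v \in \bfI$ and $\{u,v\} \in E(T)$, and that $(\bfI \setminus \{v\}) \cap V(\Tminus) = \bfI \cap V(\Tminus)$, so it indeed suffices to rule out $(\Tminus, \bfI \cap \Tminus)$-rigid tokens.

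The key structural observation is that, rooting $T$ at $u$, for every vertex $z$ of $\Tminus$ the subtree hanging below $z$ (the subtree $\Tsub{z}{u}$, away from $u$) has exactly the same vertices and edges whether it is computed inside $T$ or inside $\Tminus$. Hence a token's rigidity inside such a downward subtree is unambiguous and transfers verbatim between the two trees. Comparing the characterization of Lemma~\ref{lem:rigid}(b) for $x$ in $T$ and in $\Tminus$, every condition attached to a child of $x$ is therefore identical in the two trees, and the only direction in which $T$ and $\Tminus$ differ for $x$ is the one pointing toward $u$. Along that direction the sole difference is the gadget formed by $u$ and its leaves, attached to $\Tminus$ only at $w$; crucially its leaf $v$ carries a token that is rigid as a one-vertex subtree by Lemma~\ref{lem:rigid}(a). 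Informally, this rigid token on $v$ re-supplies precisely the upward guard at $w$ that is lost when $u$ is deleted, so it should plug the new escape route that $u$ would otherwise open.

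To make this precise I would prove, by induction on $\dist(c,u)$ over the vertices $c$ on the path from $x$ up to $w$, the following transfer statement: if the neighbor of $c$ towards $u$ carries a token that is rigid in the upward subtree of $\Tminus$ cut off at $c$, then the same token is rigid in the corresponding, strictly larger, upward subtree of $T$ (the one that additionally contains the gadget at $w$). The base case is when that neighbor is $w$ itself: there the upward subtree of $T$ gains the edge $\{w,u\}$, and the guard needed for the direction $u$ in Lemma~\ref{lem:rigid}(b) is supplied by the one-vertex-rigid token on $v$. In the inductive step I apply Lemma~\ref{lem:rigid}(b) to the token-carrying entry vertex: its downward guards transfer verbatim by the structural observation, while its unique upward guard, when it lies strictly closer to $u$, is upgraded to a guard valid in $T$ by the induction hypothesis (applied nearer to $u$). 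Feeding this transfer statement into the single upward direction of $x$ (the downward directions being automatic, and the case $x=w$ being handled directly by the guard $v$) shows that $x$ meets the full rigidity criterion of Lemma~\ref{lem:rigid}(b) in $T$, i.e.\ its token is $(T,\bfI)$-rigid — the required contradiction.

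The main obstacle is exactly this upward direction: rigidity in a subtree does not imply rigidity in a larger tree, so the $\Tminus$-guards cannot simply be quoted for $T$, and the auxiliary induction above seems unavoidable. Its engine is the single-vertex rigid token on $v$, which functions as a wall; here safeness of $v$ is used twice, since it is what both keeps $\Tminus$ a single tree $\Tsub{w}{u}$ rather than a forest and confines the deleted gadget's attachment to the single vertex $w$.
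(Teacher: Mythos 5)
Your proposal is correct, but it reaches the conclusion by a genuinely different route from the paper. The paper also argues by contradiction, but it picks a $(\Tminus,\Iminus)$-rigid token on a vertex $w_p$ \emph{closest to $v$}, lets $z$ be the vertex just before $w_p$ on the $vw_p$-path, and splits into the cases $z=u$ and $z\neq u$; in both cases the contradiction comes out of Lemma~\ref{lem:atmostone} (an unoccupied vertex has at most one neighbor whose token is rigid in its hanging subtree), the second case contradicting the choice of $w_p$ as closest. You instead prove the stronger transfer statement that, in this configuration, $(\Tminus,\Iminus)$-rigidity implies $(T,\bfI)$-rigidity, by inducting along the path toward $u$ and re-checking the characterization of Lemma~\ref{lem:rigid}(b) direction by direction: the downward subtrees (and hence their guards) coincide in $T$ and $\Tminus$, and the single upward direction is guarded, at the top of the induction, by the token on $v$, which is rigid in its one-vertex subtree by Lemma~\ref{lem:rigid}(a). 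Your route uses only Lemma~\ref{lem:rigid} and makes explicit the mechanism at work---the token on $v$ exactly re-supplies the guard lost by deleting $u$---whereas the paper's proof is shorter because Lemma~\ref{lem:atmostone} lets it avoid the path induction; I checked your induction (base case $c^+=w$, step reducing the distance to $u$) and it is sound. One small omission: before reducing the lemma to ruling out $(\Tminus,\bfI\cap\Tminus)$-rigid tokens, you need that every token of $\bfI\setminus\{v\}$ actually lies in $V(\Tminus)$, i.e., that no token sits on a degree-$1$ neighbor of $u$ other than $v$; otherwise $(\Tminus,\bfI\setminus\{v\})$-movability is not even defined for such a token. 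The paper gets this from Lemma~\ref{lem:atmostone}; alternatively, two tokens on leaves of $u$ could each move only to $u$ and would block one another, so both would be $(T,\bfI)$-rigid, contradicting the hypothesis that all tokens are $(T,\bfI)$-movable. With that one sentence added, your argument is complete.
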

	\begin{proof}
	Since $\Tsub{v}{u}$ consists of a single vertex $v$, the token on $v$ is $(\Tsub{v}{u}, \bfI \cap \Tsub{v}{u})$-rigid. 
	Therefore, no token is placed on degree-$1$ neighbors of $u$ other than $v$ (see \figurename~\ref{fig:delete1}), because otherwise it contradicts to Lemma~\ref{lem:atmostone};
recall that all tokens in $\bfI$ are assumed to be $(T, \bfI)$-movable.

%

	Let $\Iminus = \bfI \setminus \{ v \}$. 
	Suppose for a contradiction that there exists a token in $\Iminus$ which is $(\Tminus, \Iminus)$-rigid.
	Let $w_p \in \Iminus$ be such a vertex closest to $v$, and let $z$ be the vertex on the $vw_p$-path right before $w_p$.
\smallskip

\begin{figure}[b]
	\begin{center}
	\includegraphics[width=0.85\textwidth]{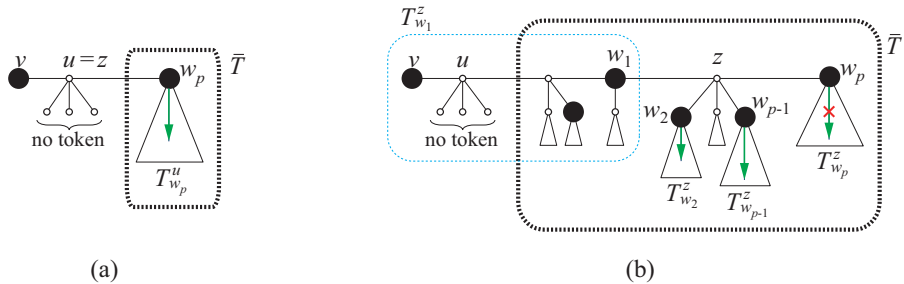}
	\end{center}
	\vspace{-1em}
	\caption{Illustration for Lemma~\ref{lem:delete1}.}
	\label{fig:delete1}
\end{figure}

\noindent
	{\bf Case (1):} $z = u$. (See \figurename~\ref{fig:delete1}(a).)

	Recall that the token on $v$ is $(T, \bfI)$-movable, but is $(\Tsub{v}{u}, \bfI \cap \Tsub{v}{u})$-rigid. 
	Therefore, by Lemma~\ref{lem:atmostone} the token on $w_p$ must be $(\Tsub{w_p}{u}, \bfI \cap \Tsub{w_p}{u})$-movable.
	However, this contradicts the assumption that $w_p$ is $(\Tminus, \Iminus)$-rigid, because $\Tminus = \Tsub{w_p}{u}$ and $\Iminus = \bfI \cap \Tsub{w_p}{u}$ in this case. 
\medskip

\noindent
	{\bf Case (2):} $z \neq u$. (See \figurename~\ref{fig:delete1}(b).)

	Let $w_1$ be the neighbor of $z$ on the $vw_p$-path other than $w_p$; 
let $\Nei{T}{z} = \{ w_1, w_2, \ldots, w_p\}$.
	We note that the subtree $\Tsub{w_1}{z}$ contains the deleted star $T \setminus \Tminus$ centered at $u$.
	
	We first note that the token $t_p$ on $w_p$ is $(\Tmsub{w_p}{z}, \Iminus \cap \Tmsub{w_p}{z})$-rigid, because otherwise $t_p$ can be slid to some vertex in $\Tmsub{w_p}{z}$ and hence it is $(\Tminus, \Iminus)$-movable. 
	Since $\Tmsub{w_p}{z} = \Tsub{w_p}{z}$ and $\Iminus \cap \Tmsub{w_p}{z} = \bfI \cap \Tsub{w_p}{z}$, the token $t_p$ is also $(\Tsub{w_p}{z}, \bfI \cap \Tsub{w_p}{z})$-rigid. 
	
	For each $j \in \{2,3,\ldots, p-1\}$ with $w_j \in \bfI$, since $t_p$ is $(\Tsub{w_p}{z}, \bfI \cap \Tsub{w_p}{z})$-rigid and all tokens in $\bfI$ are $(T, \bfI)$-movable, by Lemma~\ref{lem:atmostone} each token $t_j$ on $w_j$ is $(\Tsub{w_j}{z}, \bfI \cap \Tsub{w_j}{z})$-movable. 
	Then, since $\Tsub{w_j}{z} = \Tmsub{w_j}{z}$ and $\bfI \cap \Tsub{w_j}{z} = \Iminus \cap \Tmsub{w_j}{z}$, the token $t_j$ is $(\Tmsub{w_j}{z}, \Iminus \cap \Tmsub{w_j}{z})$-movable. 
	Therefore, if $w_1 \not\in \Iminus$ or the token $t_1$ on $w_1$ is $(\Tmsub{w_1}{z}, \Iminus \cap \Tmsub{w_1}{z})$-movable, then we can slide $t_p$ from $w_p$ to $z$ after sliding each token $t_j$ in $\Iminus \cap \{ w_1, w_2, \ldots, w_{p-1} \}$ to some vertex of the subtree $\Tmsub{w_j}{z}$.
	This contradicts the assumption that $t_p$ is $(\Tminus, \Iminus)$-rigid. 
	
	Therefore, we have $w_1 \in \Iminus$ and a token $t_1$ on $w_1$ is $(\Tmsub{w_1}{z}, \Iminus \cap \Tmsub{w_1}{z})$-rigid. 
	However, since $t_p$ is $(\Tmsub{w_p}{z}, \Iminus \cap \Tmsub{w_p}{z})$-rigid, this implies that $t_1$ is $(\Tminus, \Iminus)$-rigid. 
	Since $w_1$ is on the $vw_p$-path in $T$, this contradicts the assumption that $t_p$ is the $(\Tminus, \Iminus)$-rigid token closest to $v$.  
\qed
\end{proof}

\smallskip

\noindent
	{\bf Proof of the if-part of Lemma~\ref{lem:secondkey}.}
	
	We now prove the if-part of the lemma by the induction on the number of tokens $\msize{\bfI_b} = \msize{\bfI_r}$. 
	The lemma clearly holds for any tree $T$ if $\msize{\bfI_b} = \msize{\bfI_r} = 1$, because $T$ has only one token and hence we can slide it along the unique path in $T$. 
	
	We choose an arbitrary safe degree-$1$ vertex $v$ of a tree $T$, whose unique neighbor is $u$. 
	Since all tokens in $\bfI_b$ are $(T, \bfI_b)$-movable, by Lemma~\ref{lem:degree1} we can obtain an independent set $\bfI_b^\prime$ of $T$ such that $v \in \bfI_b^\prime$ and $\bfI_b \sevstepT{T} \bfI_b^\prime$. 
	By Lemma~\ref{lem:delete1} all tokens in $\bfI_b^\prime \setminus \{ v \}$ are $(\Tminus, \bfI_b^\prime \setminus \{v\})$-movable, where $\Tminus$ is the subtree defined in Lemma~\ref{lem:delete1}. 
	Similarly, we can obtain an independent set $\bfI_r^\prime$ of $T$ such that $v \in \bfI_r^\prime$, $\bfI_r \sevstepT{T} \bfI_r^\prime$ and all tokens in $\bfI_r^\prime \setminus \{ v \}$ are $(\Tminus, \bfI_r^\prime \setminus \{v\})$-movable. 
	Apply the induction hypothesis to the pair of independent sets $\bfI_b^\prime \setminus \{ v\}$ and $\bfI_r^\prime \setminus \{v\}$ of $\Tminus$.
	Then, we have $\bfI_b^\prime \setminus \{ v\} \sevstepT{\Tminus} \bfI_r^\prime \setminus \{v\}$.
	Recall that both $u \not\in \bfI_b^\prime$ and $u \not\in \bfI_r^\prime$ hold, and $u$ is the unique neighbor of $v$ in $T$.
	Furthermore, $u \not\in V(\Tminus)$. 
	Therefore, we can extend the reconfiguration sequence in $\Tminus$ between $\bfI_b^\prime \setminus \{v\}$ and $\bfI_r^\prime \setminus \{v\}$ to a reconfiguration sequence in $T$ between $\bfI_b^\prime$ and $\bfI_r^\prime$. 
	We thus have $\bfI_b \sevstepT{T} \bfI_r$. 
	
	This completes the proof of Lemma~\ref{lem:secondkey}, and hence completes the proof of Theorem~\ref{the:tree}.
\qed

	\subsection{Length of reconfiguration sequence}
	\label{subsec:length}
	
	In this subsection, we show that an actual reconfiguration sequence can be found for a yes-instance on trees, by implementing our proofs in Section~\ref{subsec:algorithm}.
	Furthermore, the length of the obtained reconfiguration sequence is at most quadratic. 
	\begin{theorem} \label{the:length}
	Let $\bfI_b$ and $\bfI_r$ be two independent sets of a tree $T$ with $n$ vertices. 
	If $\bfI_b \sevstepT{T} \bfI_r$, then there exists a reconfiguration sequence of length $O(n^2)$ between $\bfI_b$ and $\bfI_r$, and it can be output in $O(n^2)$ time. 
	\end{theorem}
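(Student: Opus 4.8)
The plan is to make the existence proof behind Lemma~\ref{lem:secondkey} constructive and then account for both the length of the produced sequence and the time to emit it. First I would run the decision algorithm of Section~\ref{subsec:algorithm}: compute $\ImSet{b}$ and $\ImSet{r}$ (equal, since we are given a yes-instance) and the forest $\Forest = T_1 \cup \dots \cup T_{\numq}$ obtained by deleting $\Neiclosed{T}{\ImSet{b}}$, all in $O(n)$ time. Because the rigid tokens never move, it suffices to output, independently for each $T_j$, a sequence transforming $\bfI_b \cap T_j$ into $\bfI_r \cap T_j$; by Lemma~\ref{lem:step2} all tokens in these two sets are movable, so Lemma~\ref{lem:secondkey} applies. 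Rather than transform directly, I would route both configurations through a common target $\bfIstar$: repeatedly pick a safe degree-$1$ vertex $v$ of the current subtree, slide one token onto $v$ exactly as in the proof of Lemma~\ref{lem:degree1}, freeze $v$, and delete $v$, its neighbor $u$, and the resulting isolated vertices (Lemma~\ref{lem:delete1}), recursing on the smaller subtree. Running this both from $\bfI_b \cap T_j$ and from $\bfI_r \cap T_j$ uses the \emph{same} sequence of safe vertices, hence reaches the same $\bfIstar$; concatenating the first sequence with the reverse of the second gives a valid sequence between $\bfI_b \cap T_j$ and $\bfI_r \cap T_j$ by reversibility.

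For the length I would show that one application of Lemma~\ref{lem:degree1} on a subtree $T^\prime$ costs only $O(\msize{V(T^\prime)}) = O(n)$ token-slides. Sliding the chosen token to $v$ along the path $P$ costs $\ell \le n$ slides, and the only obstacles are the tokens of $M^\prime \setminus \{w^\prime\}$ on neighbors of $p_{\ell-1}$; each token on such a $w^{\prime\prime}$ is movable in $\Tsub{w^{\prime\prime}}{p_{\ell-1}}$ and must be pushed into that subtree. These subtrees are pairwise disjoint, so it suffices to bound the cost of \emph{vacating} the root of a subtree $S$ whose root-token is movable by $\msize{V(S)}$. I would prove this by induction on $\msize{V(S)}$ using Lemma~\ref{lem:rigid}: by the contrapositive of part~(b), movability of the root-token yields a child $c$ such that every token-bearing neighbor $g$ of $c$ in $S$ carries a token that is movable in $\Tsub{g}{c}$; recursively vacate each such $\Tsub{g}{c}$ and then slide the root-token onto $c$. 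Since the $\Tsub{g}{c}$ are disjoint and exclude the root and $c$, the cost telescopes to $\sum_g \msize{V(\Tsub{g}{c})} + 1 \le \msize{V(S)}$. Summing over the disjoint obstacle subtrees, each application of Lemma~\ref{lem:degree1} costs $O(n)$; there are at most $\msize{\bfI_b} = O(n)$ rounds in total across all $T_j$, and the two traversals to and from $\bfIstar$ therefore cost $O(n^2)$ altogether.

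For the running time, each round can be implemented in $O(n)$ time: a safe degree-$1$ vertex and a nearest token to it are found by a single traversal of the current subtree, and the $O(n)$ slides prescribed by Lemma~\ref{lem:degree1} are generated in $O(n)$ time, since the vacating recursion above visits each vertex of the obstacle subtrees a constant number of times. With $O(n)$ rounds this yields $O(n^2)$ time, which simultaneously bounds the number of emitted slides and hence matches the $O(n^2)$ length claim.

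The hard part is the length bound, specifically the claim that obstacle-clearing does not blow up. A careless implementation could push the same token back and forth across rounds, or drive a single obstacle deep into a large subtree repeatedly, giving $\omega(n^2)$ slides; the original proof of Lemma~\ref{lem:degree1} moves obstacles to ``some vertices in their subtrees'' without any quantitative control. The crux is therefore the inductive estimate that vacating the root of $S$ costs at most $\msize{V(S)}$ slides, combined with the disjointness of the subtrees $\Tsub{w^{\prime\prime}}{p_{\ell-1}}$, which confines the obstacle work of each round to $O(n)$ slides regardless of how the cascade propagates.
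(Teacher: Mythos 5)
Your proposal is correct and follows essentially the same route as the paper: reduce to the all-movable case via the forest of Lemma~\ref{lem:step2}, peel off safe degree-$1$ vertices to route both $\bfI_b$ and $\bfI_r$ to a common $\bfIstar$, and bound each round by a quantitative ``vacate the root of a subtree $S$ in $O(\msize{V(S)})$ slides'' lemma. Your inductive vacating estimate is precisely the paper's Lemma~\ref{lem:moving_root} (the paper inducts on the depth of $\Tsub{w}{z}$ rather than on $\msize{V(S)}$, an immaterial difference), and the round-counting and time analysis match the paper's as well.
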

	
	We note that a reconfiguration sequence $\calS$ can be represented by a sequence of edges on which tokens are slid. 
	Therefore, the space for representing $\calS$ can be bounded by a linear in the length of $\calS$. 
	
	By Theorem~\ref{the:tree} we can determine whether $\bfI_b \sevstepT{T} \bfI_r$ or not in $O(n)$ time. 
	In the following, we thus assume that $\bfI_b \sevstepT{T} \bfI_r$. 
	Furthermore, suppose that all tokens in $\bfI_b$ are $(T, \bfI_b)$-movable, and that all tokens in $\bfI_r$ are $(T, \bfI_r)$-movable;
otherwise we obtain the forest by deleting the vertices in $\Neiclosed{T}{\ImSet{b}} = \Neiclosed{T}{\ImSet{r}}$ from $T$, and find a reconfiguration sequence for each tree in the forest, according to Lemma~\ref{lem:step2}. 

	As in the if-part proof of Lemma~\ref{lem:secondkey}, we choose an arbitrary safe degree-$1$ vertex $v$ of $T$, and obtain an independent set $\bfIp_b$ of $T$ such that $v \in \bfIp_b$ and $\bfI_b \sevstepT{T} \bfIp_b$, as follows.
	\begin{listing}{aaa}
	\item[(a)] Find a vertex $w \in \bfI_b$ which is closest to $v$, and let $P = (v, p_{1}, p_{2}, \ldots, p_{\ell-1}, w)$ be the $vw$-path in $T$.
					Let $M^\prime = \bfI_b \cap \Nei{T}{p_{\ell-1}}$.
					(See also \figurename~\ref{fig:nearest}.)
	\item[(b)] Choose a vertex $w^\prime$ such that the token on $w^\prime$ is $(\Tsub{w^\prime}{p_{\ell-1}}, \bfI \cap \Tsub{w^\prime}{p_{\ell-1}})$-rigid if it exists, otherwise choose an arbitrary vertex in $M^\prime$ and regard it as $w^\prime$. 
	\item[(c)] Slide each token on $w^{\prime \prime} \in M^\prime \setminus \{w^\prime\}$ to some vertex in $\Tsub{w^{\prime \prime}}{p_{\ell-1}}$, and then slide the token on $w^\prime$ to $v$.
	\end{listing}
	In Lemma~\ref{lem:degree1} we have proved that such a reconfiguration sequence from $\bfI_b$ to $\bfIp_b$ always exists. 
	We apply the same process to $\bfI_r$, and repeat until we obtain the same independent set $\bfIstar$ of $T$ such that $\bfI_b \sevstepT{T} \bfIstar$ and $\bfI_r \sevstepT{T} \bfIstar$. 
	Note that, since any reconfiguration sequence is reversible, this means that we obtained a reconfiguration sequence between $\bfI_b$ and $\bfI_r$. 
	
	Therefore, to prove Theorem~\ref{the:length}, it suffices to show that the algorithm above runs in $O(n)$ time for one safe degree-$1$ vertex $v$ and the reconfiguration sequence for sliding one token to $v$ is of length $O(n)$. 
	In particular, the following lemma completes the proof of Theorem~\ref{the:length}. 
	\begin{lemma} \label{lem:moving_root}
	Let $\bfI$ be an independent set of a tree $T$, and let $w \in \bfI$. 
	For a neighbor $z \in \Nei{T}{w}$, suppose that the token on $w$ is $(\Tsub{w}{z}, \bfI \cap \Tsub{w}{z})$-movable. 
	Then, there exists a reconfiguration sequence $\calS_w$ of length $O(\msize{V(\Tsub{w}{z})})$ from $\bfI$ to an independent set $\bfIp$ of $T$ such that $w \not\in \bfIp$ and $J \cap (T \setminus \Tsub{w}{z}) = \bfI \cap (T \setminus \Tsub{w}{z})$ for all $J \in \calS_w$. 
	Furthermore, $\calS_w$ can be output in $O(\msize{V(\Tsub{w}{z})})$ time.
	\end{lemma}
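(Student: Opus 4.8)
The plan is to prove the lemma by induction on $\msize{V(\Tsub{w}{z})}$, building $\calS_w$ explicitly so that both the length and the running-time bounds fall out of the recursion. First note that the base case $\msize{V(\Tsub{w}{z})} = 1$ is vacuous: in that case the token on $w$ could only be slid to $z \notin V(\Tsub{w}{z})$, so it would be $(\Tsub{w}{z}, \bfI \cap \Tsub{w}{z})$-rigid, contradicting the hypothesis. Hence in the inductive step $w$ has at least one child in $\Tsub{w}{z}$.

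To decide where to send the token, I would invoke Lemma~\ref{lem:rigid} for the tree $\Tsub{w}{z}$ rooted at $w$: since the token on $w$ is $(\Tsub{w}{z}, \bfI \cap \Tsub{w}{z})$-movable, there is a child $v \in \Nei{\Tsub{w}{z}}{w}$ such that every token on a vertex $w'' \in \bfI \cap \Nei{\Tsub{v}{w}}{v}$ is $(\Tsub{w''}{v}, \bfI \cap \Tsub{w''}{v})$-movable. Here $v \notin \bfI$ because $w \in \bfI$ and $\bfI$ is independent, the vertices $w''$ are exactly the token-bearing children of $v$, and the subtrees $\Tsub{w''}{v}$ are pairwise disjoint and each strictly smaller than $\Tsub{w}{z}$.

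Next I would apply the induction hypothesis to each such $w''$ with parent $v$, obtaining a reconfiguration sequence $\calS_{w''}$ of length $O(\msize{V(\Tsub{w''}{v})})$ that slides the token off $w''$ using only vertices of $\Tsub{w''}{v}$ and leaving everything outside $\Tsub{w''}{v}$ fixed. Because the subtrees $\Tsub{w''}{v}$ are pairwise disjoint, none of them contains $v$, and the only neighbor of $v$ inside $\Tsub{w''}{v}$ is $w''$ itself, after concatenating all of the $\calS_{w''}$ the vertex $v$ and every child of $v$ carry no token, while the token on $w$ has not moved. I would then append the single slide of the token from $w$ to $v$, which is legal and produces an independent set $\bfIp$ with $w \notin \bfIp$. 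Every slide lies inside $\Tsub{w}{z}$, and since $z \notin \bfI$ the unique edge joining $\Tsub{w}{z}$ to the rest of $T$ never creates a conflict; hence each set in $\calS_w$ is independent in $T$ and agrees with $\bfI$ on $T \setminus \Tsub{w}{z}$, as required.

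For the length, disjointness gives $\sum_{w''} \msize{V(\Tsub{w''}{v})} \le \msize{V(\Tsub{w}{z})} - 2$, so summing the inductive bounds and adding one for the final slide yields a sequence with $O(\msize{V(\Tsub{w}{z})})$ slides, hence of length $O(\msize{V(\Tsub{w}{z})})$. For the running time, which is the delicate point, I would avoid recomputing rigidity at every level of the recursion (which would cost $\Omega(n)$ per level and give a quadratic total). Instead I would run a single bottom-up pass over $\Tsub{w}{z}$, in the style of the algorithm of Lemma~\ref{lem:all-rigid-tokens}, that marks for every token vertex $x$ with parent $p$ whether its token is $(\Tsub{x}{p}, \bfI \cap \Tsub{x}{p})$-rigid; this takes $O(\msize{V(\Tsub{w}{z})})$ time. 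With these marks in hand, choosing a valid child $v$ at a node only scans that node's children and grandchildren, and since the recursion descends into pairwise-disjoint subtrees this selection work telescopes to $O(\msize{V(\Tsub{w}{z})})$ overall. The main obstacle is exactly this last point: making the construction run in time linear in $\msize{V(\Tsub{w}{z})}$ rather than merely producing a linear-length sequence.
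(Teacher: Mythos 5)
Your proposal is correct and follows essentially the same route as the paper's proof: use Lemma~\ref{lem:rigid} to select a neighbor of $w$ all of whose token-bearing grandchildren subtrees are movable, recursively clear those pairwise-disjoint subtrees, then slide the token from $w$ onto that neighbor, with the length bound telescoping over the disjoint subtrees. The only differences are that you induct on $\msize{V(\Tsub{w}{z})}$ instead of on the depth of $\Tsub{w}{z}$, and that you make the $O(\msize{V(\Tsub{w}{z})})$ running time explicit via a single bottom-up rigidity-marking pass in the spirit of Lemma~\ref{lem:all-rigid-tokens} --- a point the paper's proof treats only implicitly, so this is a welcome refinement rather than a divergence.
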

	\begin{proof}
	We prove the lemma by the induction on the depth of $\Tsub{w}{z}$, where the depth of a tree is the longest distance from its root to a leaf.
	If the depth of $\Tsub{w}{z}$ is zero (and hence $\Tsub{w}{z}$ consists of a single vertex $w$), then the token on $w$ is $(\Tsub{w}{z}, \bfI \cap \Tsub{w}{z})$-rigid; 
this contradicts the assumption. 
	Therefore, we may assume that the depth is at least one.
	If the depth of $\Tsub{w}{z}$ is exactly one, then $\Tsub{w}{z}$ is a star centered at $w$, and no token is placed on any neighbor of $w$. 
	Thus, we can slide the token on $w$ by $1$ $(< \msize{V(\Tsub{w}{z})})$ token-slides.
	Then, the lemma holds for trees with depth one.

	Assume that the depth of $\Tsub{w}{z}$ is $k \ge 2$, and that the lemma holds for trees with depth at most $k-1$.
	Since $w$ is $(\Tsub{w}{z}, \bfI \cap \Tsub{w}{z})$-movable, by Lemma~\ref{lem:rigid} there is a vertex $y \in \Nei{\Tsub{w}{z}}{w}$ such that all tokens on the vertices $x$ in $\bfI \cap \Nei{\Tsub{y}{w}}{y}$ are $(\Tsub{x}{y}, \bfI \cap \Tsub{x}{y})$-movable. 
(See \figurename~\ref{fig:length}.)
	Then, we can obtain a reconfiguration sequence which (1) first slides all tokens on the vertices $x$ in $\bfI \cap \Nei{\Tsub{y}{w}}{y}$ to some vertices in $\Tsub{x}{y}$, and (2) then slide the token on $w$ to the vertex $y$.
	By applying the induction hypothesis to each subtree $\Tsub{x}{y}$, this reconfiguration sequence is of length 
	\[
		1 + \sum_{x \in \bfI \cap \Nei{\Tsub{y}{w}}{y}} O \left(\msize{V(\Tsub{x}{y})} \right) = O(\msize{V(\Tsub{y}{w})}),
	\]
and can be output in time $O(\msize{V(\Tsub{y}{w})})$. 
	Note that $w \not\in \bfIp$ holds for the obtained independent set $\bfIp$ of $T$.
	Thus, the lemma holds for trees with depth $k$. 
	\qed
	\end{proof}

\begin{figure}[t]
	\begin{center}
	\includegraphics[width=0.38\textwidth]{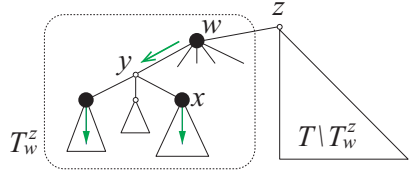}
	\end{center}
	\vspace{-1em}
	\caption{Illustration for Lemma~\ref{lem:moving_root}.}
	\label{fig:length}
\end{figure}

It is interesting that there exists an infinite family of instances on paths
for which any reconfiguration sequence requires $\Omega(n^2)$ length,
where $n$ is the number of vertices. 
For example, consider a path $(v_1,v_2,\ldots,v_{8k})$ with $n=8k$ vertices for any positive integer $k$, and
let $\bfI_b=\{v_1,v_3,v_5,\ldots,v_{2k-1}\}$ and $\bfI_r=\{v_{6k+2},v_{6k+4},\ldots,v_{8k}\}$.
In this yes-instance, any token must be slid $\Theta(n)$ times, 
and hence any reconfiguration sequence requires $\Theta(n^2)$ length to slide them all.

\section{Concluding Remarks}

	In this paper, we have developed an $O(n)$-time algorithm to solve the {\sc sliding token} problem for trees with $n$ vertices, based on a simple but non-trivial characterization of rigid tokens. 
	We have shown that there exists a reconfiguration sequence of length $O(n^2)$ for any yes-instance on trees, and it can be output in $O(n^2)$ time. 
	Furthermore, there exists an infinite family of instances on paths for which any reconfiguration sequence requires $\Omega(n^2)$ length.

	The complexity status of {\sc sliding token} remains open for chordal graphs and interval graphs.
	Interestingly, these graphs have no-instances such that all tokens are movable. 
(See \figurename~\ref{fig:interval} for example.) 

\begin{figure}[t]
	\begin{center}
	\includegraphics[width=0.37\textwidth]{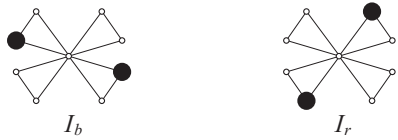}
	\end{center}
	\vspace{-1em}
	\caption{No-instance for an interval graph such that all tokens are movable.}
	\label{fig:interval}
\end{figure}

\bibliographystyle{abbrv}

\end{document}